\renewcommand\footnoterule{%
  \kern-3\p@
  \hrule\@width.4\columnwidth
  \kern2.6\p@}
\renewcommand\@makefntext[1]{%
    \parindent 1em\noindent
    \hb@xt@1.8em{\hss$^{\@thefnmark}$)}\hspace{2pt}%
    \footnotesize\rmfamily#1}  
\def\@makefnmark{\hspace{.5pt}\hbox{$^{\@thefnmark}$%
\hspace{-1pt})}} \setcounter{footnote}{0}
\renewcommand\footnoterule{%
  \kern-3\p@
  \hrule\@width.4\columnwidth
  \kern2.6\p@}
\renewcommand\@makefntext[1]{%
    \parindent 1em\noindent
    \hb@xt@1.8em{\hss$^{\@thefnmark}$)}\hspace{2pt}%
    \footnotesize\rmfamily#1}  
\def\@makefnmark{\hspace{.5pt}\hbox{$^{\@thefnmark}$%
\hspace{-1pt}}} \setcounter{footnote}{0}
 \newtheorem{thm}{Theorem}[section]
 \newtheorem{cor}[thm]{Corollary}
 \newtheorem{lem}[thm]{Lemma}
 \newtheorem{prop}[thm]{Proposition}
 \theoremstyle{definition}
 \newtheorem{defn}[thm]{Definition}
 \theoremstyle{remark}
 \newtheorem{rem}[thm]{Remark}
\newcommand{\Lc}{{\mathcal L}}
\newcommand{\Hil}{\mathcal H}
\newcommand{\cH}{\mathcal{H}}
\newcommand{\cC}{\mathcal{C}}
\newcommand{\eqref}[1]{(\ref{#1})}
\begin{document}
\title[Generalized Riesz systems and orthonormal sequences]{Generalized Riesz systems and orthonormal sequences in Krein spaces}

\author{Fabio Bagarello${}^a$\  and \ Sergiusz Ku\.{z}el${}^b$}

\address{${}^a$\
DEIM -Dipartimento di Energia, ingegneria dell' Informazione e modelli Matematici,
Scuola Politecnica, Universit\`a di Palermo, I-90128  Palermo, Italy
and  INFN, Sezione di Napoli.  \\
${}^b$\  AGH University of Science and Technology, Faculty of Applied Mathematics, 30-059 al. Mickiewicza 30, Krak\'{o}w, Poland}
\eads{\mailto{fabio.bagarello@unipa.it},\ 
\mailto{kuzhel@agh.edu.pl}}
\begin{abstract}
We analyze special classes of bi-orthogonal sets of vectors in Hilbert and in Krein spaces, 
and their relations with generalized Riesz systems. In this way, the notion of the first/second type sequences
is introduced and studied.   
We also discuss their relevance in some concrete quantum mechanical system driven by manifestly non self-adjoint Hamiltonians. 
\end{abstract}

\pacs{02.30.Sa,  02.30.Tb, 03.65.Ta}


%
%
%
%
%
%
%
%
%


\section{Introduction}
The employing of non self-adjoint operators for the description of experimentally observable data
goes back to the early days of quantum mechanics. In the past twenty years, the steady interest in this subject grew 
considerably after it has been discovered  \cite{B1, D2}  that the spectrum of the 
manifestly non self-adjoint Hamiltonian
\begin{equation}\label{e1b}
    H=-\frac{d^2}{dx^2} + x^2(ix)^\epsilon, \qquad  0<\epsilon<2
\end{equation}
 is real. It was conjectured \cite{B1} that the reality of eigenvalues of $H$ is a consequence of its
$\mathcal{P}\mathcal{T}$-symmetry: ${\mathcal P}{\mathcal
T}H=H{\mathcal P}{\mathcal T}$,  where the space 
parity operator $\mathcal{P}$ and the complex conjugation operator
$\mathcal{T}$ are defined as follows:
$({\mathcal P}f)(x)=f(-x)$ and $({\mathcal T}f)(x)=\overline{f(x)}.$
This gave rise to a consistent development of theory of $\mathcal{PT}$-symmetric Hamiltonians in Quantum Physics,
see  \cite{bagbook_thebook,  bagpsasstra, BE} and  references therein.

Usually, $\mathcal{PT}$-symmetric Hamiltonians can be
interpreted as self-adjoint ones for a suitable choice of \emph{indefinite} inner product.
For instance, the operator $H$ in (\ref{e1b}) is  self-adjoint with respect to the indefinite inner product
$[f,g]=\int_{-\infty}^{\infty}f(-x)\overline{g(x)}dx$ in the Krein space  $(\Lc^2(\Bbb R), [\cdot,\cdot])$ 
(see  Subsection \ref{sec4a} for the definition of Krein spaces).  The eigenstates of $H$ lose the property of being Riesz basis in the original Hilbert space 
$\Lc^2(\Bbb R)$ but they still form a complete set in $\Lc^2(\Bbb R)$ \cite{new1, new2}.
 Moreover, they form a sequence which is orthonormal with respect to the indefinite
inner product $[\cdot, \cdot]$. 
Such kind of phenomenon is typical for  $\mathcal{PT}$-symmetric Hamiltonians and it gives rise to a natural problem:
\emph{What can we say about properties of vectors which form a complete set in a Hilbert space and, 
additionally,   are orthonormal with respect
to the indefinite inner product?}

The main objective of the paper is to investigate  this problem with the use of theory of generalized Riesz systems (GRS) and 
$\mathcal{G}$-quasi bases. These two concepts were originally introduced in \cite{Inoue} and \cite{bag2013JMP}, respectively 
and then analyzed in a series of papers, see, e.g. \cite[Chapter 3]{bagbook_thebook}, \cite{BB, BIT},  \cite{Inoue1}. 
The motivation for introducing GRS and $\mathcal{G}$-quasi bases was the need to put on a mathematically rigorous 
ground several physical models where the eigenstates of some non self-adjoint operator 
were usually claimed to be bases, while they were not.  

The paper is structured as follows.  Section \ref{sec2} contains facts related to GRS 
and $\mathcal{G}$-quasi bases.  We slightly change the original definition of GRS given in \cite{Inoue}  
putting in evidence the role of a self-adjoint operator $Q$, 
since it is more convenient for our purpose:
\emph{a sequence $\{\phi_n\}$ in a Hilbert space $\cH$ is called a generalized Riesz system (GRS) if there exists a self-adjoint operator $Q$ in $\cH$ and 
an orthonormal basis (ONB) $\{e_n\}$  such that  $e_n\in{D}(e^{Q/2})\cap{D}(e^{-Q/2})$ and $\phi_n=e^{Q/2}e_n$.}

A GRS $\{\phi_n\}$ is a Riesz basis if and only if $Q$ is a bounded operator.  

For each GRS  $\{\phi_n\}$, the dual  GRS is determined by the formula $\{\psi_n=e^{-Q/2}e_n\}$.
The dual GRS $\{\phi_n\}$ and  $\{\psi_n\}$ are biorthogonal. 

 In Subsection \ref{subsection2.2}, the phenomenon of nonuniqueness of a self-adjoint operator $Q$ in the formulas
 \begin{equation}\label{intr1}
 \phi_n=e^{Q/2}e_n, \qquad   \psi_n=e^{-Q/2}e_n
 \end{equation}
 is investigated in the case of regular biorthogonal sequences $\{\phi_n\}$ and  $\{\psi_n\}$. 
We firstly prove that the operator $Q$ and ONB $\{e_n\}$ are determined uniquely for dual bases.
In general, this property does not hold for regular biorthogonal sequences. 
To describe all possible self-adjoint operators $Q$ in \eqref{intr1}  we consider
a positive densely defined operator $G_0$ which acts as 
$G_0\phi_n=\psi_n$ and then is extended on $D(G_0)=span\{\phi_n\}$  by the linearity. 
The proved statement is: \emph{for given $\{\phi_n\}$ and  $\{\psi_n\}$, the set of admissible operators $Q$ in
\eqref{intr1} is in one-to-one correspondence with the set of extremal
extensions $G$ of $G_0$, precisely, $Q=-\ln{G}$.}   This fact allows one to characterize the important
case where $Q$ is determined uniquely.   

Section \ref{sec4} contains the main results. We begin with the simple fact that each complete sequence
 $\{\phi_n\}$ which is orthonormal in a Krein space $(\cH, [\cdot,\cdot])$  (briefly,  $J$-orthonormal) 
 is a GRS and therefore, it can be expressed by \eqref{intr1} for some choice of $Q$. 
If $Q$  is uniquely determined, then the  anti-commutation relation  $JQ=-QJ$ holds. 
For this reason, the anti-commutation relation is reasonable to keep in the general case
(if $Q$ is not determined uniquely, then we cannot state, in general, that $JQ=-QJ$): 
we say that a complete $J$-orthonormal sequence $\{\phi_n\}$ is of \emph{the first type} if 
there  exists  $Q$ in \eqref{intr1} such that  $JQ=-QJ$. Otherwise,  $\{\phi_n\}$ is of \emph{the second type}.

We proved that the formula \eqref{intr1} defines first type sequences if and only if $Q$ anticommutes with 
$J$ and elements of ONB $\{e_n\}$ are eigenvectors of $J$.

The first type  sequences have a lot of useful properties.
One of benefits is the fact that a first type sequence  generates a $\cC$-symmetry operator $\cC=e^QJ$
where $Q$ is the same operator as in \eqref{intr1}.  The latter allows one to construct the Hilbert space  $(\cH_{-Q}, \langle\cdot, \cdot\rangle_{-Q})$
involving $\{\phi_n\}$ as ONB,  directly as the completion of $D(\cC)$ with respect to ``$\cC\mathcal{PT}$-norm'': 
 \begin{equation}\label{AGH19b}
 \langle\cdot, \cdot\rangle_{-Q}=[\cC\cdot, \cdot]=\langle Je^QJ\cdot,\cdot\rangle =\langle e^{-Q}\cdot, \cdot\rangle.
 \end{equation}

For a second type sequence,  the inner product  $\langle\cdot,\cdot\rangle_{-Q}$  generated by an operator $Q$  from \eqref{intr1} cannot be expressed via  $[\cdot, \cdot]$
 and one should apply more efforts for the determination of $\langle\cdot,\cdot\rangle_{-Q}$, see Subsection \ref{sec3.2}.

Assume that a complete $J$-orthonormal sequence $\{\phi_n\}$ consists of eigenfunctions of a Hamiltonian $H$
 corresponding to real eigenvalues $\{\lambda_n\}$. The sequence $\{\phi_n\}$ generates $\cC$-symmetry operators  
(Proposition \ref{NEE}). If $\{\phi_n\}$ is of the first type, then there exists at least one $\cC$ such that the operator $H$ restricted on 
 $\mbox{span}\{\phi_n\}$ turns out to be essential self-adjoint in the Hilbert space $\cH_{-Q}$ with the inner product \eqref{AGH19b}.
 The spectrum of the closure of $H$ in $\cH_{-Q}$ coincides with the closure of $\{\lambda_n\}$. Hence,  
 we construct an isospectral realization of $H$ in $\cH_{-Q}$.
 
 For a second type sequence, each operator $\cC$ generated by  $\{\phi_n\}$ gives rise to the Hilbert space  $(\cH_{-Q}, \langle\cdot, \cdot\rangle _{-Q})$
 with \emph{non-densely defined symmetric operator $H$}. 
 Its extensions to self-adjoint operators in $\cH_{-Q}$ lead to the appearance
of new spectral points. Therefore, self-adjoint realizations constructed with the use of $\cC\mathcal{PT}$-norm cannot be isospectral.
The isospectrality of self-adjoint realizations of $H$ in $\cH_{-Q}$ can be achieved via the construction of the Friedrichs extension $G=e^{-Q}$ of
the symmetric operator $G_0\phi_n=[\phi_n, \phi_n]J\phi_n$ defined on $\mbox{span}\{\phi_n\}$, that
 is quite complicated problem. 

Section \ref{sec5} contains examples of $J$-orthonormal sequences. We show that
the eigenstates of the shifted harmonic oscillator constitute a first type sequence.  This example leads to the following
conjecture: \emph{eigenstates of a $\mathcal{PT}$-symmetric Hamiltonian $H$ with \emph{unbroken} $\mathcal{PT}$-symmetry \cite[p. 41]{BE}
form a first type sequence}.  

In what follows, $\mathcal{H}$ means a complex Hilbert space with inner product  linear in the first argument.
Sometimes, it is useful to specify the inner product $\langle\cdot,\cdot\rangle $ associated with $\cH$.
In that case the notation $(\cH, \langle\cdot,\cdot\rangle )$ has already been used, and will be used in the following. 
All operators in $\mathcal{H}$ are supposed to be linear, the identity operator is denoted by $I$.
The symbols ${D}(A)$ and ${R}(A)$ denote the domain and the range of a linear operator
$A$. An operator $A$ is called  positive [nonnegative] if   $(Af,f)\rangle 0$  \  [$(Af,f)\geq{0}$] for non-zero $f\in{D}(A)$. 

\section{Generalized Riesz systems in Hilbert spaces}\label{sec2}
Let $\{\phi_n\}$ be a Riesz basis in $\cH$. Then there exists a bounded and boundedly invertible operator
$R$ such that $\phi_n=Re_n'$,  where $\{e_n'\}$ is an orthonormal basis (ONB) of $\cH$.
The  operator $RR^*$ is positive and self-adjoint in $\cH$ and it admits
the presentation $RR^*=e^Q$, where $Q$ is a bounded self-adjoint operator. 
The polar decomposition of $R$ has the form $R=\sqrt{RR^*}U=e^{Q/2}U$, where $U$ is a unitary operator in $\cH$. 
The unitarity of $U$ means that $\{e_n=Ue_n'\}$ is an ONB of $\cH$ and  we can  rewrite the definition of Riesz bases as follows:
 \emph{a sequence $\{\phi_n\}$ is called a Riesz basis if there exists a bounded self-adjoint operator $Q$ in $\cH$ and 
an ONB $\{e_n\}$  such that  $\phi_n=e^{Q/2}e_n$.}
This simple observation leads to: 

\begin{defn}\label{d1}
	A sequence $\{\phi_n\}$ is called a generalized Riesz system (GRS) if there exists a self-adjoint operator $Q$ in $\cH$ and 
	an ONB $\{e_n\}$  such that  $e_n\in{D}(e^{Q/2})\cap{D}(e^{-Q/2})$ and $\phi_n=e^{Q/2}e_n$.
\end{defn}

Let $\{\phi_n\}$ be a GRS. In view of Definition \ref{d1}, the sequence  
$\{\psi_n=e^{-Q/2}e_n\}$ is well defined and it is a biorthogonal sequence for $\{\phi_n=e^{Q/2}e_n\}$.
 Obviously, $\{\psi_n\}$ is a GRS which we will call \emph{a dual GRS}.
 Dual GRS are Riesz bases if and only if $Q$ is a bounded operator. 

{\bf Example 1:--}
 A first simple example of GRS can be extracted from \cite{BB}: if we take $Q=-\frac{x^2}{2}$, $x$ being the position operator, 
 it is clear that $D(e^{Q/2})=\Lc^2(\Bbb R)$, while
$$
D(e^{-Q/2})=D(e^{x^2/4})=\left\{f(x)\in\Lc^2(\Bbb R): \: e^{x^2/4}f(x)\in\Lc^2(\Bbb R) \right\}.
$$
This set is dense in $\Lc^2(\Bbb R)$, since contains each eigenfunction of the quantum harmonic oscillator
\begin{equation}\label{AGH88}
e_n(x)=\frac{1}{\sqrt{2^nn!\sqrt{\pi}}}H_n(x)e^{-x^2/2},  \quad H_n(x)=e^{x^2/2}\left(x-\frac{d}{dx}\right)^ne^{-x^2/2}.
\end{equation}
The Hermite functions $\{e_n(x)\}$ form an orthonormal basis  of $\Lc^2(\mathbb{R})$.

Following \cite{BB}, we have
$\phi_n(x)=e^{Q/2}e_n(x)=\frac{1}{\sqrt{2^n\,n!\sqrt{\pi}}}\,H_n(x)\,e^{-\frac{3x^2}{4}}$, \
$\psi_n(x)=e^{-Q/2}e_n(x)=\frac{1}{\sqrt{2^n\,n!\sqrt{\pi}}}\,H_n(x)\,e^{-\frac{x^2}{4}}$
and  both $\{\phi_n\}$ and $\{\psi_n\}$ are  GRS, one the dual of the other. Of course, due to the unboundedness of $Q$, they are not Riesz bases.
\vspace{3mm}

In general, the inner product 
\begin{equation}\label{new1}
\langle f, g\rangle _{-Q}:=\langle e^{-Q}f, g\rangle =\langle e^{-Q/2}f, e^{-Q/2}g\rangle ,  \qquad f,g\in{D}(e^{-Q})
\end{equation}
is not equivalent to $\langle \cdot,\cdot\rangle $ and the linear space $D(e^{-Q})$ endowed with $\langle \cdot,\cdot\rangle _{-Q}$  is only a pre-Hilbert
space. Denote by $\cH_{-Q}$ the completion of  $D(e^{-Q})$ with respect to $\langle \cdot,\cdot\rangle _{-Q}$. 
 Analogously,  the Hilbert space $(\cH_{Q}, \langle \cdot, \cdot\rangle _Q)$ is defined as the completion of $D(e^Q)$ with respect to the inner product
\begin{equation}\label{new1b}
\langle f, g\rangle _{Q}:=\langle e^{Q}f, g\rangle =\langle e^{Q/2}f, e^{Q/2}g\rangle ,  \qquad f,g\in{D}(e^{Q}).
\end{equation}

Generally, the Hilbert spaces
$\cH_{-Q}$ and $\cH_{Q}$ differ from $\cH$ (as the sets of elements). 
We can just say, in view of \eqref{new1} and  \eqref{new1b} that
$D(e^{-Q/2})\subset\cH_{-Q}$ and $D(e^{Q/2})\subset\cH_{Q}$.

\begin{lem}\label{new23}
	Let $\{e_n\}$, $e_n\in{D}(e^{Q/2})\cap{D}(e^{-Q/2})$, be a complete set in $\cH$. 
	Then $\{\phi_n=e^{Q/2}e_n\}$  and  $\{\psi_n=e^{-Q/2}e_n\}$  are complete sets  in the Hilbert spaces $\cH_{-Q}$ and 
	$\cH_{Q}$, respectively.
\end{lem}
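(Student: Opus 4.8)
The plan is to exhibit an explicit unitary operator from $\cH_{-Q}$ onto $\cH$ that transports the family $\{\phi_n\}$ onto $\{e_n\}$. Since a unitary map carries complete sets to complete sets, completeness of $\{e_n\}$ in $\cH$ will immediately give completeness of $\{\phi_n\}$ in $\cH_{-Q}$. Moreover, the assertion for $\{\psi_n=e^{-Q/2}e_n\}$ in $\cH_{Q}$ is obtained from the one for $\{\phi_n\}$ by the substitution $Q\mapsto -Q$ (which interchanges $\phi_n$ and $\psi_n$ and $\cH_{-Q}$ and $\cH_{Q}$), so it suffices to treat $\{\phi_n\}$.

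First I would verify that each $\phi_n$ genuinely lives in $\cH_{-Q}$ and that the candidate unitary sends it to $e_n$. Writing $\{E_\lambda\}$ for the spectral family of the self-adjoint operator $Q$ and using that $e^{Q/2}$ commutes with every $E_\lambda$, one gets the Stieltjes identity $d\langle E_\lambda\phi_n,\phi_n\rangle = e^{\lambda}\,d\langle E_\lambda e_n,e_n\rangle$. Hence $\int e^{-\lambda}\,d\langle E_\lambda\phi_n,\phi_n\rangle=\int d\langle E_\lambda e_n,e_n\rangle=\|e_n\|^2<\infty$, which is exactly the condition $\phi_n\in D(e^{-Q/2})\subset\cH_{-Q}$; the functional calculus then yields $e^{-Q/2}\phi_n=e^{-Q/2}e^{Q/2}e_n=e_n$, the composition being licit precisely because $e_n\in D(e^{Q/2})\cap D(e^{-Q/2})$.

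Next I would set $V:=e^{-Q/2}$ and view it as a map from the pre-Hilbert space $\bigl(D(e^{-Q/2}),\langle\cdot,\cdot\rangle_{-Q}\bigr)$ into $\cH$. By the very definition \eqref{new1}, $\langle Vf,Vg\rangle=\langle e^{-Q/2}f,e^{-Q/2}g\rangle=\langle f,g\rangle_{-Q}$, so $V$ is isometric. Its range is dense in $\cH$: for $h\in\cH$ and the spectral projection $P_N:=E([-N,N])$ one has $e^{Q/2}P_Nh\in D(e^{-Q/2})$ and $V\bigl(e^{Q/2}P_Nh\bigr)=P_Nh\to h$ as $N\to\infty$. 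An isometry with dense range from a pre-Hilbert space into the \emph{complete} space $\cH$ extends uniquely to a unitary operator $\overline V:\cH_{-Q}\to\cH$ (here one uses that $D(e^{-Q/2})$ is dense in $\cH_{-Q}$, which follows since $D(e^{-Q})\subset D(e^{-Q/2})\subset\cH_{-Q}$).

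Finally, $\overline V\phi_n=V\phi_n=e_n$ by the first step, so $\overline V$ maps $\{\phi_n\}$ bijectively onto the complete set $\{e_n\}$; as $\overline V$ is unitary, $\{\phi_n\}$ is complete in $\cH_{-Q}$, and the dual statement follows by symmetry. The hard part will not be the concluding abstract completeness argument, which is automatic once $\overline V$ is available, but rather the domain and convergence bookkeeping for the unbounded operators $e^{\pm Q/2}$: establishing that $\phi_n\in D(e^{-Q/2})$ and, above all, that $V$ has dense range so that its closure is genuinely onto $\cH$ and therefore unitary.
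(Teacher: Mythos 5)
Your proof is correct, and it rests on the same map as the paper's --- the isometry that $e^{-Q/2}$ induces from $\bigl(D(e^{-Q/2}),\langle\cdot,\cdot\rangle_{-Q}\bigr)$ into $\cH$, which sends $\phi_n$ to $e_n$ --- but you package it quite differently. The paper never constructs a global operator: it takes $f\in\cH_{-Q}$ orthogonal to all $\phi_n$, picks approximants $f_m\in D(e^{-Q})$, observes that $e^{-Q/2}f_m$ converges to some $g\in\cH$ satisfying $\langle g,e_n\rangle=\lim_m\langle f_m,\phi_n\rangle_{-Q}=0$, and concludes $g=0$, hence $f=0$ by the norm identity \eqref{AGH14b}. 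In other words, the paper uses only the isometry property and the intertwining relation, applied pointwise to the single vector under test. You instead complete the isometry to an explicit unitary $\overline V\colon\cH_{-Q}\to\cH$, which obliges you to prove dense range (your spectral-truncation argument producing $P_Nh\to h$ is fine), and then invoke transport of completeness. Two remarks on the trade-off. First, the surjectivity step is superfluous for the statement at hand: an isometry $V$ with $V\phi_n=e_n$ already transfers completeness in the needed direction, since $\langle Vf,e_n\rangle=\langle f,\phi_n\rangle_{-Q}=0$ for all $n$ forces $Vf=0$ and hence $f=0$; this economy is exactly what the paper exploits. Second, your stronger conclusion is not wasted: once $\overline V$ is known to be unitary, Proposition \ref{new11} (dual GRS are orthonormal bases of $\cH_{-Q}$ and $\cH_{Q}$) follows immediately, because the unitary preimage of the ONB $\{e_n\}$ is an ONB; the paper instead obtains orthonormality separately from \eqref{new1}. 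Your domain bookkeeping ($\phi_n\in D(e^{-Q/2})$ and $e^{-Q/2}e^{Q/2}e_n=e_n$ via the spectral calculus) matches what the paper uses implicitly, and the density of $D(e^{-Q/2})$ in $\cH_{-Q}$ that you invoke is exactly the inclusion $D(e^{-Q/2})\subset\cH_{-Q}$ asserted in the paper after \eqref{new1b}.
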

\begin{proof} In view of \eqref{new1},  
	\begin{equation}\label{AGH14b}
	\|f\|_{-Q}^2=\langle e^{-Q}f, f\rangle =\|e^{-Q/2}f\|^2 \quad \mbox{for all}  \quad  f\in{D}(e^{-Q}).
	\end{equation}
Therefore, $\{{f}_n\}$ ($f_n\in{D}(e^{-Q})$) is a Cauchy sequence in $\cH_{-Q}$
	if and only if  $\{e^{-Q/2}{f_n}\}$ is a Cauchy sequence in $\cH$. 
	
	By construction, the vectors $\phi_n$ belong to  $D(e^{-Q/2})$.
	Hence, $\phi_n\in\cH_{-Q}$.  Assume that $f\in\cH_{-Q}$ is orthogonal to $\{\phi_n\}$ and consider
	a sequence $\{f_m\}$  $(f_m\in{D}(e^{-Q}))$ such that ${f}_m\to f\in\cH_{-Q} \  (\mbox{wrt.} \ \|\cdot\|_{-Q})$.
	Then,  $\{e^{-Q/2}{f_m}\}$ is a Cauchy sequence in $\cH$ and  $e^{-Q/2}{f_m}$ converges to some $g\in\cH  \  (\mbox{wrt.} \ \|\cdot\|)$. 
	This means that
	$$
	0=\langle f, \phi_n\rangle _{-Q}=\lim_{m\to\infty}\langle {f}_m, e^{Q/2}e_n\rangle _{-Q}=\lim_{m\to\infty}\langle e^{-Q/2}f_m, e_n\rangle =\langle g, e_n\rangle 
	$$
	and, as a result,  $g=0$ since $\{e_n\}$ is a complete set in $\cH$. This means that  $f=0$ and the set $\{\phi_n\}$ is complete in $\cH_{-Q}$.  Completeness of  
	$\{\psi_n\}$ in $\cH_{Q}$ is established in a similar manner.	
\end{proof}

\begin{prop}\label{new11}
	Every dual GRS $\{\phi_n=e^{Q/2}e_n\}$ and $\{\psi_n=e^{-Q/2}e_n\}$  are ONB of the 
	Hilbert spaces $(\cH_{-Q}, \langle \cdot, \cdot\rangle _{-Q})$  and  $(\cH_{Q}, \langle \cdot, \cdot\rangle _{Q})$, respectively.
\end{prop}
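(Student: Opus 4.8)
The plan is to verify the two defining properties of an orthonormal basis separately: orthonormality of $\{\phi_n\}$ with respect to $\langle\cdot,\cdot\rangle_{-Q}$, and its completeness in $\cH_{-Q}$. The completeness is already in hand, as it is precisely the content of Lemma \ref{new23}. Thus the only genuinely new work concerns orthonormality, after which the standard fact that a complete orthonormal set in a Hilbert space is an orthonormal basis closes the argument.

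First I would compute $\langle\phi_m,\phi_n\rangle_{-Q}$ directly from the definition \eqref{new1}. Since $\phi_n=e^{Q/2}e_n$ belongs to $D(e^{-Q/2})$ by the GRS hypothesis, the vectors $\phi_n$ lie in $\cH_{-Q}$ and we have $\langle\phi_m,\phi_n\rangle_{-Q}=\langle e^{-Q/2}\phi_m, e^{-Q/2}\phi_n\rangle$. The key simplification is the cancellation $e^{-Q/2}\phi_n=e^{-Q/2}e^{Q/2}e_n=e_n$, which is legitimate precisely because $e_n\in D(e^{Q/2})$ and $\phi_n=e^{Q/2}e_n\in D(e^{-Q/2})$, so the composition acts as the identity on $e_n$. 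Consequently $\langle\phi_m,\phi_n\rangle_{-Q}=\langle e_m,e_n\rangle=\delta_{mn}$, using that $\{e_n\}$ is an ONB of $\cH$.

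With orthonormality and completeness both established, I would conclude that $\{\phi_n\}$ is an ONB of $\cH_{-Q}$. The corresponding statement for the dual system $\{\psi_n\}$ in $\cH_{Q}$ then follows by symmetry, interchanging the roles of $Q$ and $-Q$: the same computation with $\psi_n=e^{-Q/2}e_n$ gives $\langle\psi_m,\psi_n\rangle_{Q}=\langle e^{Q/2}\psi_m, e^{Q/2}\psi_n\rangle=\langle e_m,e_n\rangle=\delta_{mn}$, while completeness of $\{\psi_n\}$ in $\cH_{Q}$ is again supplied by Lemma \ref{new23}.

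The argument is essentially routine; the one point requiring care is the operator cancellation $e^{-Q/2}e^{Q/2}e_n=e_n$, which is not automatic for unbounded self-adjoint $Q$ but is guaranteed here by the membership conditions $e_n\in D(e^{Q/2})\cap D(e^{-Q/2})$ built into Definition \ref{d1}. I expect this domain bookkeeping to be the only real obstacle, and it is a mild one.
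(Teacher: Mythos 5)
Your proposal is correct and follows essentially the same route as the paper's own (very terse) proof: orthonormality of $\{\phi_n\}$ in $\langle\cdot,\cdot\rangle_{-Q}$ is read off from \eqref{new1} via the cancellation $e^{-Q/2}e^{Q/2}e_n=e_n$, completeness is quoted from Lemma \ref{new23}, and the $\{\psi_n\}$ case is handled symmetrically using \eqref{new1b}. Your extra care about the domain bookkeeping (that $\phi_n\in D(e^{-Q/2})$ is forced by $e_n\in D(e^{Q/2})\cap D(e^{-Q/2})$) is a sound elaboration of what the paper leaves implicit.
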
 
\begin{proof} Due to \eqref{new1},  the sequence $\{\phi_n=e^{Q/2}e_n\}$ is orthonormal in 
	$\cH_{-Q}$. Its completeness follows from Lemma \ref{new23}. 
	The case $\{\psi_n\}$ is considered similarly with the use of \eqref{new1b}.
\end{proof}

 Dual GRS   could be used to define manifestly non self-adjoint Hamiltonians  
\begin{equation}\label{NE1}
	H_{\phi, \psi}f=\sum_{n=1}^\infty \lambda_n\langle f,\psi_n\rangle \phi_n,  \qquad
	H_{\psi,\phi}g =\sum_{n=1}^\infty \lambda_n\langle g ,\phi_n\rangle \psi_n
\end{equation}
with known complex eigenvalues $\{\lambda_n \}$ and eigenvectors $\{\phi_n\}$ and $\{\psi_n\}$, respectively.
We refer to \cite{BB, BIT} for the connection between $H_{\phi, \psi}$ and the adjoint of $H_{\psi, \phi}$
and for the analysis of ladder operators associated to similar  bi-orthogonal sets, 
and how these ladder operators can be used to factorize the Hamiltonians above.

\subsection{Dual GRS and $\mathcal{G}$-quasi bases}

Dual GRS  $\{\phi_n\}$ and $\{\psi_n\}$ can be considered as examples of more general object: $\mathcal{G}$-quasi bases. 
These are biorthogonal sets originally introduced in \cite{bag2013JMP}, and then analyzed in a series of papers (see \cite{bagbook_thebook} for a relatively recent review). 

\begin{defn}\label{d3} 
	Let $\mathcal{G}$ be a  dense linear manifold  in $\cH$.
	Biorthogonal sequences $\{\phi_n\}$ and $\{\psi_n\}$ are called $\mathcal{G}$-quasi bases,
	if for all   $f, g\in\mathcal{G}$, the following holds:
	\begin{equation}\label{e1}
	\langle f, g\rangle =\sum_{n=1}^{\infty}\langle f, \phi_n\rangle \langle \psi_n, g\rangle =\sum_{n=1}^{\infty}\langle f, \psi_n\rangle \langle \phi_n, g\rangle .
	\end{equation}
\end{defn}

\begin{prop}\label{new25}
	Dual GRS  $\{\phi_n\}$ and $\{\psi_n\}$ are $\mathcal{G}$-quasi bases
	with $\mathcal{G}={D}(e^{Q/2})\cap{D}(e^{-Q/2})$.
\end{prop}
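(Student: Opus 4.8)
The plan is to reduce both series in \eqref{e1} to a single application of Parseval's identity for the orthonormal basis $\{e_n\}$, and then to collapse the resulting inner product by means of the functional calculus for $Q$. Recall that biorthogonality of $\{\phi_n\}$ and $\{\psi_n\}$ has already been noted, and that $\mathcal{G}=D(e^{Q/2})\cap D(e^{-Q/2})$ is dense because it contains the complete set $\{e_n\}$; so only the quasi-basis identities remain. Fix $f,g\in\mathcal{G}$. Since $e^{Q/2}$ is positive self-adjoint and $f,e_n\in D(e^{Q/2})$, I would first write $\langle f,\phi_n\rangle=\langle f,e^{Q/2}e_n\rangle=\langle e^{Q/2}f,e_n\rangle$; symmetrically, using $g,e_n\in D(e^{-Q/2})$ and self-adjointness of $e^{-Q/2}$, $\langle\psi_n,g\rangle=\langle e^{-Q/2}e_n,g\rangle=\langle e_n,e^{-Q/2}g\rangle$. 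Multiplying and summing, Parseval's identity for $\{e_n\}$ gives $\sum_n\langle f,\phi_n\rangle\langle\psi_n,g\rangle=\sum_n\langle e^{Q/2}f,e_n\rangle\langle e_n,e^{-Q/2}g\rangle=\langle e^{Q/2}f,e^{-Q/2}g\rangle$.

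It then remains to prove the core identity $\langle e^{Q/2}f,e^{-Q/2}g\rangle=\langle f,g\rangle$, and this is the step I expect to be the main obstacle. The naive argument $e^{-Q/2}e^{Q/2}f=f$ is \emph{not} available, because $e^{Q/2}f$ need not belong to $D(e^{-Q/2})$, so the two unbounded operators cannot be composed directly, nor can one simply move $e^{-Q/2}$ across the inner product. I would instead invoke the spectral theorem for the self-adjoint operator $Q$: writing $E$ for its spectral measure, the hypotheses $f\in D(e^{Q/2})$ and $g\in D(e^{-Q/2})$ say precisely that $e^{\lambda/2}\in L^2(d\langle E_\lambda f,f\rangle)$ and $e^{-\lambda/2}\in L^2(d\langle E_\lambda g,g\rangle)$. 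The standard functional-calculus formula then gives $\langle e^{Q/2}f,e^{-Q/2}g\rangle=\int_{\mathbb R}e^{\lambda/2}\,e^{-\lambda/2}\,d\langle E_\lambda f,g\rangle$, the integrand being summable against the complex measure $\langle E_\lambda f,g\rangle$ by the Cauchy--Schwarz inequality for the spectral measure. Since $e^{\lambda/2}e^{-\lambda/2}\equiv 1$, the right-hand side equals $\int_{\mathbb R}d\langle E_\lambda f,g\rangle=\langle f,g\rangle$, which settles the first equality in \eqref{e1}.

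The second equality follows by the identical computation with the roles of $e^{Q/2}$ and $e^{-Q/2}$ interchanged: $\langle f,\psi_n\rangle=\langle e^{-Q/2}f,e_n\rangle$ and $\langle\phi_n,g\rangle=\langle e_n,e^{Q/2}g\rangle$, so Parseval yields $\sum_n\langle f,\psi_n\rangle\langle\phi_n,g\rangle=\langle e^{-Q/2}f,e^{Q/2}g\rangle$, and the same spectral argument (now with integrand $e^{-\lambda/2}e^{\lambda/2}\equiv1$) identifies this with $\langle f,g\rangle$. As $f,g\in\mathcal{G}$ were arbitrary, both identities in \eqref{e1} hold on $\mathcal{G}$, so $\{\phi_n\}$ and $\{\psi_n\}$ are $\mathcal{G}$-quasi bases with $\mathcal{G}=D(e^{Q/2})\cap D(e^{-Q/2})$, as claimed.
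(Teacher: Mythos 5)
Your proof is correct and follows essentially the same route as the paper's: both reduce the quasi-basis identities to Parseval's identity for the ONB $\{e_n\}$ combined with the key relation $\langle e^{Q/2}f, e^{-Q/2}g\rangle = \langle f, g\rangle$, and both obtain density of $\mathcal{G}$ from the fact that each $e_n$ lies in $\mathcal{G}$. The only difference is that you justify that key relation explicitly via the spectral theorem (correctly noting that a naive composition $e^{-Q/2}e^{Q/2}f=f$ is not available), whereas the paper asserts it without comment.
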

\begin{proof}
	If $\{\phi_n\}$ and $\{\psi_n\}$ are dual GRS, then there exists a self-adjoint operator $Q$ and ONB $\{e_n\}$ such that
	\eqref{intr1} hold.  Hence,  for all 
	$f, g\in\mathcal{G}={D}(e^{Q/2})\cap{D}(e^{-Q/2})$,
	$$
	e^{Q/2}f=\sum\langle e^{Q/2}f, e_n\rangle e_n=\sum\langle f, \phi_n\rangle e_n  
	$$
	and
	$$
	e^{-Q/2}g=\sum\langle e^{-Q/2}g, e_n\rangle e_n=\sum\langle g, \psi_n\rangle e_n.
	$$
	The last relations yield
	$$
	\langle f, g\rangle =\langle e^{Q/2}f, e^{-Q/2}g\rangle =\sum\langle f,\phi_n\rangle \langle \psi_n, g\rangle .
	$$
	Similarly, $\langle f, g\rangle =\langle e^{-Q/2}f, e^{Q/2}g\rangle =\sum\langle f,\psi_n\rangle \langle \phi_n, g\rangle .$  To complete the proof, it suffices
	notice that $\mathcal{G}$  is dense in $\cH$ since each vector of ONB $\{e_n\}$ belongs to ${D}(e^{Q/2})\cap{D}(e^{-Q/2})$.
\end{proof}

\begin{rem}
	Proposition \ref{new25} implies that Example 1 above of GRS provides  
	also an example of $\mathcal{G}$-quasi bases, with $\mathcal{G}=D(e^{x^2/4})$, in agreement with what was found in \cite{BB}.
\end{rem}

\subsection{Regular biorthogonal sequences and dual GRS}\label{sectIII.3}

 We say that biorthogonal sequences $\{\phi_n\}$ and  $\{\psi_n\}$ are \emph{regular} 
if $\{\phi_n\}$ and  $\{\psi_n\}$ are complete sets in $\cH$.  
In other words,  a biorthogonal sequence $\{\psi_n\}$ is defined uniquely by $\{\phi_n\}$ and vice-versa.

\begin{thm}\label{new31}
	Regular biorthogonal sequences $\{\phi_n\}$ and  $\{\psi_n\}$ are dual GRS.
\end{thm}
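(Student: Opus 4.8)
The plan is to realize the desired operator as minus the logarithm of a positive self-adjoint extension of the operator sending $\phi_n$ to $\psi_n$. First I would define $G_0$ on $D(G_0)=\mathrm{span}\{\phi_n\}$ by $G_0\phi_n=\psi_n$, extended linearly; this is unambiguous because biorthogonality forces $\{\phi_n\}$ to be linearly independent, and $D(G_0)$ is dense since $\{\phi_n\}$ is complete. Using $\langle\psi_n,\phi_m\rangle=\delta_{nm}$ one checks on finite combinations $f=\sum a_n\phi_n$ that $\langle G_0 f,f\rangle=\sum_n|a_n|^2$, so $G_0$ is symmetric and strictly positive. The motivation is that if $\phi_n=e^{Q/2}e_n$ and $\psi_n=e^{-Q/2}e_n$, then $\psi_n=e^{-Q}\phi_n$, so any admissible $Q$ must satisfy $e^{-Q}\supset G_0$; hence I look for $Q=-\ln G$ with $G$ a positive self-adjoint extension of $G_0$.

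Next I would invoke the fact that a densely defined positive symmetric operator admits positive self-adjoint extensions, and take $G$ to be the Friedrichs extension of $G_0$. A key preliminary point, resolved by the \emph{dual} completeness, is that $G$ is injective: if $Gf_0=0$, then for every $n$ we have $\langle\psi_n,f_0\rangle=\langle G_0\phi_n,f_0\rangle=\langle G\phi_n,f_0\rangle=\langle\phi_n,Gf_0\rangle=0$, so $f_0=0$ because $\{\psi_n\}$ is complete. Therefore $Q:=-\ln G$ is a well-defined self-adjoint operator via the functional calculus, with $e^{-Q/2}=G^{1/2}$ and $e^{Q/2}=G^{-1/2}$. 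Setting $e_n:=G^{1/2}\phi_n$ (which makes sense since $\phi_n\in D(G_0)\subset D(G^{1/2})$), biorthogonality yields orthonormality directly: $\langle e_n,e_m\rangle=\langle G\phi_n,\phi_m\rangle=\langle\psi_n,\phi_m\rangle=\delta_{nm}$.

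The main obstacle is to show that $\{e_n\}$ is actually an orthonormal \emph{basis}, i.e.\ complete, and this is exactly where the choice of the Friedrichs extension enters. Suppose $h\perp e_n$ for all $n$, that is $\langle h,G^{1/2}\phi_n\rangle=0$ for all $n$. Since $D(G_0)$ is a core for $G^{1/2}$ in the case of the Friedrichs extension (its form domain is the closure of $D(G_0)$ in the graph norm of $G^{1/2}$), a density argument upgrades this to $\langle h,G^{1/2}\phi\rangle=0$ for every $\phi\in D(G^{1/2})$, i.e.\ $h\perp R(G^{1/2})$. As $G^{1/2}$ is self-adjoint and injective, $R(G^{1/2})$ is dense, forcing $h=0$. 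Hence $\{e_n\}$ is an ONB.

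Finally I would read off the representation. Since $e_n=G^{1/2}\phi_n\in R(G^{1/2})\cap D(G^{1/2})=D(e^{Q/2})\cap D(e^{-Q/2})$, the sequence meets the standing domain requirement, and $e^{Q/2}e_n=G^{-1/2}G^{1/2}\phi_n=\phi_n$ while $e^{-Q/2}e_n=G^{1/2}G^{1/2}\phi_n=G\phi_n=\psi_n$. Thus $\{\phi_n\}$ and $\{\psi_n\}$ are dual GRS. I expect the delicate point to be precisely the completeness step: one must verify that $D(G_0)$ is a core for $G^{1/2}$ and not merely for $G$, since a generic positive self-adjoint extension need not produce a complete $\{e_n\}$, so the specific properties of the Friedrichs extension are essential.
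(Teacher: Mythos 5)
Your proposal is correct and follows essentially the same route as the paper: the operator $G_0\phi_n=\psi_n$ on $\mathrm{span}\{\phi_n\}$, its Friedrichs extension $G=e^{-Q}$ (positive/injective because $\{\psi_n\}$ is complete), and $e_n=G^{1/2}\phi_n$ with orthonormality from biorthogonality. The only difference is in packaging the completeness of $\{e_n\}$: the paper passes to the completion $\cH_{-Q}$ and uses density of $\mathrm{span}\{\phi_n\}$ there, while you invoke the equivalent fact that $D(G_0)$ is a form core for $G^{1/2}$ and conclude $h\perp R(G^{1/2})$ directly in $\cH$ — the same Friedrichs-extension property, executed without the auxiliary space.
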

\begin{proof}
	Let $\{\phi_n\}$ and $\{\psi_n\}$ be regular biorthogonal sequences. 
	Then an operator $G_0$ defined initially as 
	\begin{equation}\label{new14}
	G_0\phi_n=\psi_n,  \qquad  n\in\mathbb{N}   
	\end{equation}
	and extended on $D(G_0)=span\{\phi_n\}$  by the linearity is 
	densely defined and positive.  The later follows from the fact that
	$$ 
	\langle G_0f, f\rangle =\sum_{n=1}^k\sum_{m=1}^kc_n\overline{c}_m\langle \psi_n, \phi_m\rangle =\sum_{n=1}^k{|c_n|^2} 
	$$ 
	for all  $f=\sum_{n=1}^k{c_n}\phi_n\in{D(G_0)}$.
	
	Let $G$ be the Friedrichs extension of $G_0$. The operator $G$ is positive. Indeed, assuming that $Gg=0$ for 
	 $g\in{D}(G)$ we obtain
	$0=\langle Gg, \phi_n\rangle =\langle g, G\phi_n\rangle =\langle g, \psi_n\rangle $.  Therefore, $g=0$ since $\{\psi_n\}$ is a complete set in $\cH$.
	The positivity of $G$ allows one to state that $G=e^{-Q}$,  where $Q$ is a self-adjoint operator in $\cH$.
	Denote
	$e_n=e^{-Q/2}\phi_n$.  Due to \eqref{new14}, $e_n=e^{Q/2}\psi_n$. Therefore,  
	$e_n\in{D}(e^{Q/2})\cap{D}(e^{-Q/2})$    and
	\begin{equation}\label{new15}
	\phi_n=e^{Q/2}e_n,  \qquad  \psi_n=e^{-Q/2}e_n. 
	\end{equation}
	
	The sequence $\{e_n\}$ is orthonormal in $\cH$ since  
	$$
	\langle e_n, e_m\rangle =\langle e^{-Q/2}\phi_n,  e^{Q/2}\psi_m\rangle =\langle \phi_n, \psi_m\rangle =\delta_{nm}.
	$$
	
	Let us assume that $\gamma\in\cH$ is orthogonal to  $\{e_n\}$. 
	Then there exists a sequence $\{f_m\}$  $(f_m\in{D}(e^{-Q}))$ such that  $e^{-Q/2}{f_m}\to\gamma$ in $\cH$
	(because $e^{-Q/2}D(e^{-Q})$ is a dense set in $\cH$).
	In this case, due to \eqref{AGH14b},  $\{f_m\}$ is a Cauchy sequence in $\cH_{-Q}$ and therefore,  ${f}_m$ tends to some $f\in\cH_{-Q}$. 
This means that
    \begin{equation}\label{neww4}\fl
	0=\langle \gamma,  e_n\rangle =\lim_{m\to\infty}\langle e^{-Q/2}{f_m}, e_n\rangle =\lim_{m\to\infty}\langle f_m, \phi_n\rangle _{-Q}=\langle f, \phi_n\rangle _{-Q}.
	\end{equation}
	We note that the set $D(G_0)=span\{\phi_n\}$ is dense in the Hilbert space $(\cH_{-Q}, \langle \cdot,\cdot\rangle _{-Q})$  (since $G=e^{-Q}$ is the Friedrichs extension  of 
	$G_0$ \cite{AK_Arlin}). In view of \eqref{neww4}, $f=0$ that means
$\lim_{m\to\infty}\|f_m\|_{-Q}=\lim_{m\to\infty}\|e^{-Q/2}f_m\|=0$ and therefore,  $\gamma=0$.
		 This means that the orthonormal sequence  $\{e_n\}$ is complete in $\cH$. Hence $\{e_n\}$ is an ONB.
\end{proof}

\begin{rem}
Another proof of Theorem \ref{new31} can be found in \cite[Theorem 2.1]{Inoue1}.	
\end{rem}

\subsection{The uniqueness of $Q$ and $\{e_n\}$ for regular biorthogonal sequences}\label{subsection2.2}

Let $\{\phi_n\}$ be a basis in $\cH$.  Then $\{\phi_n\}$ is a regular sequence because
its biorthogonal sequence $\{\psi_n\}$ has to be a basis.  By Theorem \ref{new31}, 
$\{\phi_n\}$ is a GRS, i.e., there exists a self-adjoint operator $Q$ and an ONB $\{e_n\}$ such that  $\phi_n=e^{Q/2}e_n$. 

\begin{prop}\label{lem1}
The operator $Q$ and  ONB $\{e_n\}$ in Definition \ref{d1} are determined \emph{uniquely} for every basis $\{\phi_n\}$.
\end{prop}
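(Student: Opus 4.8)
The plan is to reduce uniqueness of $Q$ to uniqueness of the positive self-adjoint operator $G:=e^{-Q}$, and then to show that for a basis this operator is forced to be the closure of the canonical operator $G_0$ introduced in Theorem \ref{new31}. Since $Q=-\ln G$ and $e_n=e^{-Q/2}\phi_n=G^{1/2}\phi_n$, pinning down $G$ pins down both $Q$ and $\{e_n\}$.

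First I would record existence and set up the reduction. As $\cH$ is reflexive, every basis is shrinking, so the biorthogonal sequence $\{\psi_n\}$ is itself a basis of $\cH$; in particular $\{\phi_n\}$ is regular, and Theorem \ref{new31} furnishes at least one pair $(Q,\{e_n\})$ with $\phi_n=e^{Q/2}e_n$ and $\psi_n=e^{-Q/2}e_n$. Now let $(Q,\{e_n\})$ be \emph{any} such pair. A short spectral-calculus check (using $e_n\in D(e^{Q/2})\cap D(e^{-Q/2})$) shows $\phi_n\in D(e^{-Q})$ with $e^{-Q}\phi_n=\psi_n$, so $G:=e^{-Q}$ is a positive self-adjoint extension of $G_0$. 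Because $G_0$ is intrinsic to $\{\phi_n\}$ (its biorthogonal $\{\psi_n\}$ is unique), it suffices to prove that every such $G$ coincides with $\overline{G_0}$.

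The heart of the argument is to show that $\mathrm{span}\{\phi_n\}$ is a core for $G$. Given $f\in D(G)$, set $f_N:=\sum_{n\le N}\langle f,\psi_n\rangle\phi_n$. Since $\{\phi_n\}$ is a basis with coefficient functionals $\langle\cdot,\psi_n\rangle$, we have $f_N\to f$ in $\cH$. Applying $G_0$ gives $G_0 f_N=\sum_{n\le N}\langle f,\psi_n\rangle\psi_n$, and I would identify this as a partial sum of the expansion of $Gf$ in the basis $\{\psi_n\}$, using $\langle Gf,\phi_n\rangle=\langle f,G\phi_n\rangle=\langle f,\psi_n\rangle$ (valid as $G$ is self-adjoint and $\phi_n\in D(G)$). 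As $\{\psi_n\}$ is itself a basis, $G_0 f_N\to Gf$. Hence $(f_N,G_0 f_N)\to(f,Gf)$, so $f\in D(\overline{G_0})$ and $\overline{G_0}f=Gf$; combined with the trivial inclusion $\overline{G_0}\subseteq G$ this yields $\overline{G_0}=G$. Consequently $G$ is unique, and therefore so are $Q=-\ln G$ and $e_n=G^{1/2}\phi_n$.

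The main obstacle — and precisely where the basis hypothesis, rather than mere regularity, is indispensable — is this core property. It rests on the convergence of \emph{both} expansions $f=\sum\langle f,\psi_n\rangle\phi_n$ and $Gf=\sum\langle f,\psi_n\rangle\psi_n$; for a merely regular (complete biorthogonal, non-basis) sequence these series need not converge, $\mathrm{span}\{\phi_n\}$ need not be a core, and $G_0$ may then admit several positive self-adjoint extensions, consistent with the non-uniqueness treated in Subsection \ref{subsection2.2}. The one supporting fact to handle with care is that $\{\psi_n\}$ is genuinely a basis of $\cH$ (every Hilbert-space basis is shrinking), which is what legitimizes expanding $Gf$ along $\{\psi_n\}$; once that is in place the remaining steps are routine.
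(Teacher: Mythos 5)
Your proposal is correct, but it reaches uniqueness by a genuinely different mechanism than the paper. Both arguments share the same skeleton: reduce the claim to showing that the positive symmetric operator $G_0\phi_n=\psi_n$ on $\mathrm{span}\{\phi_n\}$ admits only one self-adjoint extension (so that $G=e^{-Q}$, hence $Q=-\ln G$ and $e_n=e^{-Q/2}\phi_n$, are pinned down), and both exploit in an essential way that $\{\phi_n\}$ \emph{and} its biorthogonal sequence $\{\psi_n\}$ are bases. The paper gets there via the range criterion for essential self-adjointness: if $\gamma\perp\mathcal{R}(G_0+I)$, then expanding $\gamma$ along both bases yields $\gamma=-\sum\langle\gamma,\psi_n\rangle\psi_n$, and the limit computation $-\|\gamma\|^2=\lim_m\langle G_0\gamma_m,\gamma_m\rangle=\sum_n|\langle\gamma,\psi_n\rangle|^2\ge 0$ forces $\gamma=0$; density of $\mathcal{R}(G_0+I)$ then gives that $\overline{G}_0$ is self-adjoint, hence the only self-adjoint extension. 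You instead show directly that $\mathrm{span}\{\phi_n\}$ is a core for \emph{every} self-adjoint extension $G$: expanding $f\in D(G)$ along $\{\phi_n\}$ and identifying $G_0f_N$ as the partial sums of the expansion of $Gf$ along $\{\psi_n\}$ (legitimate because $\langle Gf,\phi_n\rangle=\langle f,G\phi_n\rangle=\langle f,\psi_n\rangle$), you obtain graph convergence $(f_N,G_0f_N)\to(f,Gf)$, whence $G=\overline{G}_0$. The trade-off: the paper's computation is a bit shorter and leans on the standard fact that a nonnegative symmetric operator with dense $\mathcal{R}(G_0+I)$ is essentially self-adjoint, while your route avoids that criterion entirely, uses nothing beyond graph closure, and delivers the slightly sharper and more transparent statement that every admissible $G=e^{-Q}$ coincides with $\overline{G}_0$ with $\mathrm{span}\{\phi_n\}$ as a core. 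Your preliminary reductions (that $\{\psi_n\}$ is itself a basis since every basis of a reflexive space is shrinking, and that $e^{-Q}\phi_n=\psi_n$ by spectral calculus together with uniqueness of the biorthogonal sequence) are exactly the facts the paper also uses, the first tacitly and the second implicitly; making them explicit is a point in your write-up's favor.
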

\begin{proof}
Let $\gamma$ be orthogonal to $\mathcal{R}(G_0+I)$. Then, in view of \eqref{new14},
$\langle \gamma, \phi_n\rangle =-\langle \gamma, \psi_n\rangle $  and the basis property of $\{\phi_n\}$ and $\{\psi_n\}$ leads to the relation:
\begin{equation}\label{KKK3}
\gamma=\sum\langle \gamma, \psi_n\rangle \phi_n=\sum\langle \gamma, \phi_n\rangle \psi_n=-\sum\langle \gamma, \psi_n\rangle \psi_n.
\end{equation}

By virtue of \eqref{KKK3}, the sequence $\gamma_m=\sum_{n=1}^m\langle \gamma, \psi_n\rangle \phi_n$ tends to $\gamma$, while
$G_0\gamma_m=\sum_{n=1}^m\langle \gamma, \psi_n\rangle \psi_n$ tends to $-\gamma$. Therefore, 
$$
-\|\gamma\|^2=\lim_{m\to\infty}\langle G_0\gamma_m, \gamma_m\rangle =\lim_{m\to\infty}\sum_{n=1}^m|\langle \gamma, \psi_n\rangle |^2=\sum_{n=1}^\infty|\langle \gamma, \psi_n\rangle |^2
$$ 
that is possible when $\gamma=0$. Hence, $\mathcal{R}(G_0+I)$ is a dense set in $\cH$ and, as a result, $G_0$ is an essentially self-adjoint operator 
in $\cH$. Its closure $\overline{G}_0$ gives   a unique positive self-adjoint extension $G$  which determines a unique self-adjoint operator  $Q=-\ln G$
(i.e. $G=e^{-Q}$). Moreover, because of the equality $e_n=e^{-Q/2}\phi_n$, the ONB $\{e_n\}$  is also determined uniquely. 
\end{proof}

 In view of Proposition \ref{lem1} a natural question arise: \emph{is the operator $Q$ determined uniquely for a  given GRS $\{\phi_n\}$?} 

The choice of the Friedrichs extension $G=e^{-Q}$ of $G_0$ in the proof of Theorem \ref{new31}  was inspired by the fact 
that the sequence $\{\phi_n\}$ must be complete in the Hilbert space $\cH_{-Q}$ 
(that, in view of \eqref{neww4} and Lemma \ref{new23},  is equivalent to the completeness of orthonormal system $\{e_n\}$ in $\cH$). 
Generally, there are many self-adjoint extensions $G$ of $G_0$ which preserve this property and each of them 
can be used instead of the Friedrichs extension. 

We recall \cite{AK_Arlin}  that a nonnegative self-adjoint extension  $G$ of
$G_0$ is called extremal if
$$
\inf_{\phi\in{D}(G_0)}{\langle G(f-\phi),(f-\phi)\rangle }=0 \quad \mbox{for all} \quad f\in{D}(G).
$$
The Friedrichs extension and the Krein-von Neumann extension of $G_0$ are examples of extremal extensions. 

In the case of regular bi-orthogonal sequences $\{\phi_n\}$ and $\{\psi_n\}$,  the symmetric operator $G_0$ is positive and each nonnegative self-adjoint extension  $G$ of
$G_0$  must also be \emph{positive}. Indeed, if $\langle Gf,f\rangle =0$ for some $f\in{D}(G)$, then $Gf=0$ and
$$
 0=\langle Gf, \phi_n\rangle =\langle f, G\phi_n\rangle =\langle f, G_0\phi_n\rangle =\langle f, \psi_n\rangle 
$$
that implies $f=0$. Therefore, each nonnegative self-adjoint extension $G$ of $G_0$ is positive and it has the form $G=e^{-Q}$. This means that
 $\langle G(f-\phi), (f-\phi)\rangle $ coincides with  $\|f-\phi\|^2_{-Q}$ due to  \eqref{AGH14b}.  For this reason, 
 the definition of extremal extensions can be rewritten as follows:  let $G_0$ be determined by \eqref{new14}, 
 where $\{\phi_n\}$ and $\{\psi_n\}$ are  regular biorthogonal sequences. 
A self-adjoint extension $G=e^{-Q}$ of $G_0$ is called \emph{extremal} if 
$$
\inf_{\phi\in{D}(G_0)}\|f-\phi\|^2_{-Q}=0 \quad \mbox{for all} \quad f\in{D}(e^{-Q}).
$$
Therefore, the extremality of a self-adjoint extension $e^{-Q}$ of $G_0$ is equivalent
to the completeness of  the sequence $\{\phi_n\}$ in  $\cH_{-Q}$. This means that for each
extremal self-adjoint extension  $e^{-Q}$ one can repeat  the proof of Theorem \ref{new31}
 and establish the relations  \eqref{new15}, where $\{e_n\}$ is an ONB. 
Summing up, we prove the equivalence of statements $(i)$ and $(ii)$  in:
\begin{prop}\label{KKK5}
Let $\{\phi_n\}$ and $\{\psi_n\}$ be a regular biorthogonal sequences. The following are equivalent:
\begin{enumerate}
\item[(i)] the self-adjoint operator $Q$ and the ONB  $\{e_n\}$  are determined uniquely in 
\eqref{new15};
\item[(ii)] the symmetric operator $G_0$ in \eqref{new14} has a unique extremal extension $G=e^{-Q}$.
\item[(iii)] 
\begin{equation}\label{KKK6}
\inf_{\phi\in{D}(G_0)}\frac{\langle G_0\phi,\phi\rangle }{|\langle \phi,g\rangle |^2}=0,  \quad \mbox{for all nonzero} \quad g\in\ker(I+G_0^*),
\end{equation}
where $G_0^*$ means the adjoint operator of $G_0$  with respect to $\langle \cdot,\cdot\rangle $.
\end{enumerate}
\end{prop}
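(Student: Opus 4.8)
The plan is to establish the remaining equivalence $(ii)\Leftrightarrow(iii)$, since $(i)\Leftrightarrow(ii)$ has already been proved above. First I would reduce $(ii)$ to the coincidence of the Friedrichs extension $G_F$ and the Krein--von Neumann extension $G_K$ of $G_0$. Both of these are extremal (as recalled in the text) and, in the form order, they are the extreme points of the family of all nonnegative self-adjoint extensions, so that $G_K\leq G\leq G_F$ for every such $G$. Consequently, if $G_F\not=G_K$ then $G_F$ and $G_K$ are already two distinct extremal extensions, whereas if $G_F=G_K$ the squeezing forces every nonnegative self-adjoint extension---in particular every extremal one---to coincide with $G_F$. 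Since (as shown before the Proposition) every extremal extension has the form $G=e^{-Q}$ and is in one-to-one correspondence with an admissible $Q$, statement $(ii)$ is thus equivalent to $G_F=G_K$, and it remains to identify the latter with \eqref{KKK6}.

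The second step is a variational reformulation of the quantity in \eqref{KKK6}. Writing $t_F$ for the closure of the nonnegative form $\phi\mapsto\langle G_0\phi,\phi\rangle$ on $D(G_0)$ (the form of the Friedrichs extension, so that $t_F[\phi]=\langle G_0\phi,\phi\rangle$ on $D(G_0)$) and setting
$$
m(g):=\inf_{\phi\in D(G_0)}\frac{\langle G_0\phi,\phi\rangle}{|\langle\phi,g\rangle|^2},
$$
the inequality $m(g)>0$ is, by definition, equivalent to $|\langle\phi,g\rangle|^2\leq m(g)^{-1}\,t_F[\phi]$ for all $\phi\in D(G_0)$, that is, to the continuity of the functional $\phi\mapsto\langle\phi,g\rangle$ in the Friedrichs form-norm $t_F[\cdot]^{1/2}$. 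Thus \eqref{KKK6} asserts precisely that \emph{no} nonzero defect vector $g\in\ker(I+G_0^*)$ induces a form-bounded functional on $D(G_0)$. A completion argument entirely parallel to the one in Lemma \ref{new23} turns this into the geometric statement that such a $g$ extends to an element acting continuously on the Friedrichs form space.

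The decisive step is then the equivalence $G_F=G_K\Leftrightarrow\eqref{KKK6}$, for which I would invoke the theory of extremal extensions of \cite{AK_Arlin}. By the previous reformulation, \eqref{KKK6} fails exactly when some nonzero $g\in\ker(I+G_0^*)$ makes $\langle\cdot,g\rangle$ form-bounded, and the content of that theory is that such a form-bounded defect functional is precisely what separates $G_K$ from $G_F$: it furnishes a direction along which $G_F$ can be genuinely modified into a second extremal extension $G=e^{-Q}$, while in its absence the Friedrichs form-norm already dominates the entire defect space and leaves no room for a second extension, forcing $G_F=G_K$. I expect the careful implementation of this last step---manufacturing from a single form-bounded defect vector an honest second extremal extension $G=e^{-Q}$ and verifying its extremality (equivalently, the completeness of $\{\phi_n\}$ in the associated $\cH_{-Q}$, in the spirit of Lemma \ref{new23})---to be the main obstacle; by contrast, the reduction of $(ii)$ to $G_F=G_K$ and the variational reformulation are routine. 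Combining the three steps gives $(ii)\Leftrightarrow(iii)$, which together with $(i)\Leftrightarrow(ii)$ completes the proof.
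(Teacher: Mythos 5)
Your reduction of $(ii)$ to the single equality $G_F=G_K$ is correct and is exactly the paper's route: both $G_F$ and $G_K$ are extremal, and (by the form ordering $G_K\leq G\leq G_F$, or simply because the set of extremal extensions contains both of them) the extremal extension is unique precisely when $G_F=G_K$; this is the content of the paper's citation of \cite{AK_Arlin}. Your variational reformulation of \eqref{KKK6} is also fine: the infimum in \eqref{KKK6} is positive for some $g$ exactly when the functional $\phi\mapsto\langle \phi,g\rangle$ is bounded on $D(G_0)$ with respect to the seminorm $\langle G_0\phi,\phi\rangle ^{1/2}$.

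The genuine gap is the decisive step, and you flag it yourself: the equivalence $G_F=G_K\Leftrightarrow\eqref{KKK6}$. You ``invoke the theory of extremal extensions of \cite{AK_Arlin}'', asserting that a form-bounded defect functional ``is precisely what separates $G_K$ from $G_F$'', but you neither prove this nor point to a precise statement, and you concede that manufacturing a second extremal extension $G=e^{-Q}$ out of one form-bounded defect vector is ``the main obstacle''. That obstacle is not a routine verification: carrying it out amounts to reproving a nontrivial classical theorem. Indeed, \eqref{KKK6} is exactly Krein's criterion for the coincidence of the Friedrichs and Krein--von Neumann extensions (equivalently, for uniqueness of the nonnegative self-adjoint extension of $G_0$), and the paper settles this step by citing \cite[Theorem 9]{AK_Krein}. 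So your argument is structurally identical to the paper's, but the one step the paper disposes of with Krein's 1947 theorem remains, in your write-up, an admitted hole; as it stands the proof is incomplete. (A secondary inaccuracy: the criterion \eqref{KKK6} originates with Krein, not with the extremal-extension formalism of \cite{AK_Arlin}, so even as a citation your attribution points to the wrong source.)
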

\begin{proof} It suffices to establish the equivalence $(ii)$ and $(iii)$. 
Indeed, the set of extremal extensions involves the Friedrichs $G_F$ and the Krein-von Neumann $G_K$ extensions of $G_0$  and 
it contains only one element when $G_F=G_K$ \cite{AK_Arlin}. The last equality is equivalent to  \eqref{KKK6} due to \cite[Theorem 9]{AK_Krein}. 
 \end{proof}

\section{Orthonormal sequences in Krein space}\label{sec4}	

\subsection{Elements of the Krein spaces theory}\label{sec4a}

Here all necessary results of Krein spaces theory are presented in a form convenient for our exposition.
The chapters \cite[Chap. 6]{bagbook_thebook} and \cite[Chap. 8]{BE}
are recommended as complementary reading on the subject.

An operator $J$ is called \emph{fundamental symmetry} in a Hilbert space $\cH$ if
$J$ is a bounded self-adjoint operator in $\cH$  and  $J^2=I$.

Let $J$ be a non-trivial fundamental symmetry, i.e., $J\not={\pm{I}}$. The Hilbert space
$(\cH, \langle \cdot, \cdot\rangle )$ equipped with the indefinite inner product
$[\cdot, \cdot]:=\langle J\cdot, \cdot\rangle $  
is called a Krein space $(\cH, [\cdot,\cdot])$.

The principal difference between the initial inner product $\langle \cdot,\cdot\rangle $ and the indefinite inner product
$[\cdot,\cdot]$ is that there exist nonzero elements $f\in\cH$ such that 
$[f,f]<0$. An element $f\not=0$ is called  \emph{positive} or \emph{negative} if
 $[f,f]> 0$ or $[f,f]<0$, respectively.
A  closed subspace $\mathfrak{L}$ of the Hilbert space $(\cH, \langle \cdot,\cdot\rangle )$  is called \emph{positive} or {\em negative} 
if all nonzero elements $f\in\mathfrak{L}$ are, respectively, positive or negative.
A positive (negative) subspace $\mathfrak{L}$ is called
\emph{uniformly positive} (\emph{uniformly negative}) if there exists $\alpha\rangle 0$ such that
$$
[f,f]\geq\alpha\langle f,f\rangle  \qquad  (-[f,f]\geq\alpha\langle f,f\rangle ) \quad \forall{f}\in\mathfrak{L}.
$$

In each of these classes we can define maximal subspaces.
For instance, a positive subspace $\mathfrak{L}$ is called \emph{maximal positive} if $\mathfrak{L}$
is not a subspace of another positive subspace in $\cH$.  
The maximality of a (negative, uniformly positive, uniformly negative) closed subspace
is defined similarly.

Let a subspace ${\mathfrak L}$ be a maximal positive (negative). Then its
orthogonal complement with respect to the indefinite inner product $[\cdot, \cdot]$ 
$$
{\mathfrak L}^{[\bot]}=\{f\in\cH \ :\ [f,g]=0,\ \forall{g}\in
{\mathfrak L}\}
$$
is a maximal negative (positive) subspace and the $J$-orthogonal  sum
\begin{equation}\label{e8}
{\mathfrak L}[\dot{+}]{\mathfrak L}^{[\bot]}
\end{equation}
is dense in the Hilbert space $(\cH, \langle \cdot, \cdot\rangle )$
(the symbol $[\dot{+}]$ in \eqref{e8} indicates that the subspaces ${\mathfrak L}$ and
${\mathfrak L}^{[\bot]}$  are orthogonal with respect to $[\cdot,\cdot]$, i.e. $J$-orthogonal).

The $J$-orthogonal  sum \eqref{e8} coincides with $\cH$, i.e.,
\begin{equation}\label{e9}
\cH={\mathfrak L}[\dot{+}]{\mathfrak L}^{[\bot]}
\end{equation}
 if and only if ${\mathfrak L}$ is  a maximal uniformly positive (uniformly negative) subspace 
(in this case,  ${\mathfrak L}^{[\bot]}$ is maximal uniformly negative (uniformly positive)).

\begin{rem}
 The decomposition \eqref{e9} is called \emph{the fundamental decomposition} of $\cH$ and it 
is often used for (an equivalent) definition of Krein spaces. Precisely, let $\cH$  be a complex linear space with a
Hermitian sesquilinear form $[\cdot,\cdot]$  
(i.e. a mapping $[\cdot,\cdot]:{\cH}\times{\cH}\to\mathbb{C}$ such that  $
[\alpha_1f_1+\alpha_2f_2,g]=\alpha_1[f_1,g]+\alpha_2[f_2,g]$ and 
$[f,g]=\overline{[g,f]}$ for all $f_1, f_2, f, g\in{\cH}$, $\alpha_1, \alpha_2\in\mathbb{C}$). 
Then $({\cH}, [\cdot,\cdot])$ is called a Krein space if $\cH$ admits a decomposition
(\ref{e9}) such that the linear manifolds $({\mathfrak L}, [\cdot, \cdot])$ and
 $({\mathfrak L}^{[\bot]}, -[\cdot, \cdot])$ are Hilbert spaces (here we suppose for definiteness that
 ${\mathfrak L}$ is positive).
\end{rem}

 Each fundamental decomposition \eqref{e9} is uniquely determined by a bounded operator 
 $\cC$ which coincides with the identity operator on the positive subspace ${\mathfrak L}_+:={\mathfrak L}$
and with the minus identity operator on the negative subspace ${\mathfrak L}_-:={\mathfrak L}^{[\bot]}$. 
By the construction, 
${\mathfrak L}_\pm=(I\pm\cC)\cH$  and  $\cC^2=I$. Moreover,
the operator $J\cC$ is positive self-adjoint since
$$
\langle J\cC{f}, f\rangle =[\cC{f}, f]=[f_+, f_+]-[f_-, f_-]> 0 \quad \mbox{for non-zero} \quad f=f_++f_-, \ f_\pm\in{\mathfrak L}_\pm.
$$
Hence, $J\cC=e^{-Q}$, where $Q$ is a bounded self-adjoint operator. The relations $\cC^2=I, J\cC=e^{-Q}\rangle 0$ 
 and  \cite[Theorem 2.1]{KS} imply that 
\begin{equation}\label{new34}
	JQ=-QJ.
\end{equation}

Similar reasonings applied to the $J$-orthogonal sum \eqref{e8} gives rise to the collection of unbounded operators
$\cC=Je^{-Q}=e^Q{J}$, where unbounded $Q$ anticommutes with $J$. The subspaces in \eqref{e8}
are recovered as ${\mathfrak L}_\pm=(I\pm\cC)({\mathfrak L}_+[\dot{+}]{\mathfrak L}_-)$. 

Summing up: \emph{the fundamental decompositions  \eqref{e9} of a Krein space are in one-to-one correspondence
with the set of bounded operators $\cC=Je^{-Q}=e^Q{J}$.}

\emph{The $J$-orthogonal sums \eqref{e8} of maximal positive/maximal negative subspaces are in one-to-one correspondence
with the set of unbounded operators $\cC=Je^{-Q}=e^Q{J}$.  In both cases,  $Q$ anticommutes with $J$.}

The operator $\cC$ is called \emph{a $\cC$-symmetry operator} and this notion is widely used 
in ${\mathcal P}{\mathcal T}$-symmetric approach in Quantum Mechanics \cite{BE}.

\begin{rem} If  $Q$ in \eqref{new34} is unbounded, then  we understood  \eqref{new34} 
as the identity $JQf=-QJf$, where $f\in{D(Q)}$ and
$J$ leaves $D(Q)$ invariant. From now on, we will adopt this simplifying notation.
\end{rem}

A $\cC$-symmetry operator allows one to define a new inner product via the indefinite inner product $[\cdot, \cdot]$:
\begin{equation}\label{AGHNEW}\fl
\langle f, g\rangle _{-Q}:=[\cC{f}, g]=\langle J^2e^{-Q}f,g\rangle =\langle e^{-Q}f,g\rangle , \qquad f,g\in{D}(\cC)=D(e^{-Q}).
\end{equation}
The corresponding norm $\|\cdot\|_{-Q}$ is equivalent to the original norm of $\cH$ when $\cC$  is bounded.
If $\cC$ is unbounded, then the completion of ${D}(\cC)$ with respect to $\|\cdot\|_{-Q}$
leads to the Hilbert space $(\cH_{-Q}, \langle \cdot, \cdot\rangle _{-Q})$ defined in Section \ref{sec2}. 

{\color{red} \begin{rem}
		It is maybe worth mentioning that the unboundedness of the $\cC$ operator, and of the related metric, is a serious issue in ${\mathcal P}{\mathcal T}$ Quantum Mechanics. For example, \cite{bagbook_thebook}, it may happen that the basis property of the eigenvectors of a ${\mathcal P}{\mathcal T}$-symmetric Hamiltonian, obtained by considering a suitable deformation of a self-adjoint operator, is lost. This is the case, for instance, of the Swanson model and of the shifted harmonic oscillator, \cite{BAGPRA}. We meet similar difficulties also when working with Krein spaces, as it will appear clear in the remaining part of the paper.
\end{rem}}

\subsection{$J$-orthonormal sequences of the first and of the second type}\label{sec3.2}	 
A sequence $\{\phi_n\}$ is called orthonormal in a Krein space $(\cH, [\cdot,\cdot])$ (briefly,  $J$-orthonormal)
if  $|[\phi_n, \phi_m]|=\delta_{nm}$.  For each  $J$-orthonormal sequence $\{\phi_n\}$ there exists a biorthogonal one
 \begin{equation}\label{new4}
	\psi_n=[\phi_n,\phi_n]J\phi_n.
	\end{equation}
Obviously, $\{\psi_n\}$ is $J$-orthonormal and $[\phi_n, \phi_n]=[\psi_n, \psi_n]$.
In view of \eqref{new4}, the positive symmetric operator $G_0$ in \eqref{new14} acts as
\begin{equation}\label{GGG}
G_0\phi_n=[\phi_n,\phi_n]J\phi_n,  \qquad  n\in\mathbb{N}.
\end{equation}
 
 In what follows we assume that $\{\phi_n\}$ is complete in the Hilbert space $\cH$. 
 Then $\{\psi_n\}$ in \eqref{new4} is complete too. 
Therefore, $\{\phi_n\}$  and  $\{\psi_n\}$ are regular biorthogonal sequences and,
by Theorem \ref{new31},  they are dual GRS. Thus, \emph{each complete $J$-orthonormal sequence is a GRS.}
The corresponding operator  $Q=-\ln \ G$ in \eqref{new15} can be determined by every extremal extension $G=e^{-Q}$ of $G_0$. 
Such kind of freedom allows us to select an appropriative operator $Q$  which fits well with
 the $J$-orthonormality of $\{\phi_n\}$. 
 
\begin{thm}\label{new2}
	Let  $\{\phi_n\}$  be a complete $J$-orthonormal sequence. If a  self-adjoint operator $Q$ 
	in \eqref{new15} is determined uniquely, then the relation \eqref{new34} holds.
\end{thm}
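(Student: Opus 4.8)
The plan is to exploit the uniqueness hypothesis by producing a \emph{second} GRS representation of the pair $\{\phi_n\}$, $\{\psi_n\}$ built out of $J$, and then forcing it to coincide with the given one. Throughout, write $\varepsilon_n:=[\phi_n,\phi_n]\in\{-1,+1\}$, so that the biorthogonal sequence \eqref{new4} reads $\psi_n=\varepsilon_n J\phi_n$. Applying $J$ and using $J^2=I$ together with $\varepsilon_n^2=1$ immediately gives the two companion relations $J\phi_n=\varepsilon_n\psi_n$ and $J\psi_n=\varepsilon_n\phi_n$, which are the only structural facts about $J$-orthonormality I will need.

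First I would introduce the candidate data $Q':=-JQJ$ and $\tilde e_n:=\varepsilon_n J e_n$. Since $J$ is a bounded self-adjoint involution, hence unitary, $Q'$ is again self-adjoint, and $\{\tilde e_n\}$ is again an ONB of $\cH$ (the unitary $J$ preserves orthonormality and completeness, and the signs $\varepsilon_n$ are harmless). The key algebraic input is the functional-calculus identity $J e^{\pm Q/2}J=e^{\mp Q'/2}$, valid because conjugation by the unitary $J$ commutes with the Borel functional calculus, so that $J f(Q) J=f(JQJ)=f(-Q')$.

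Next I would verify that $(Q',\{\tilde e_n\})$ reproduces the same biorthogonal pair. Starting from $\phi_n=e^{Q/2}e_n$ in \eqref{new15}, I apply $J$, insert $J^2=I$, and use the identity above to obtain $J\phi_n=e^{-Q'/2}(Je_n)$; comparing with $J\phi_n=\varepsilon_n\psi_n$ and using $\varepsilon_n^{-1}=\varepsilon_n$ yields $\psi_n=e^{-Q'/2}\tilde e_n$. The symmetric computation starting from $\psi_n=e^{-Q/2}e_n$ and using $J\psi_n=\varepsilon_n\phi_n$ gives $\phi_n=e^{Q'/2}\tilde e_n$. Thus $(Q',\{\tilde e_n\})$ is a bona fide GRS representation in the sense of Definition \ref{d1}, with $\tilde e_n\in D(e^{Q'/2})\cap D(e^{-Q'/2})$ inherited from the corresponding membership of $e_n$. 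The uniqueness hypothesis then forces $Q'=Q$, i.e. $-JQJ=Q$; multiplying on the right by $J$ gives precisely $JQ=-QJ$, which is \eqref{new34}.

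The step needing the most care is the functional-calculus manipulation when $Q$ is unbounded: I must confirm that $J$ leaves the relevant domains invariant, so that $J e^{\pm Q/2}J=e^{\mp Q'/2}$ holds as a genuine operator identity and that each $\tilde e_n$ indeed lies in $D(e^{Q'/2})\cap D(e^{-Q'/2})$, ensuring the constructed pair is an admissible GRS datum to which the uniqueness statement applies. Once this domain bookkeeping is in place the conclusion is immediate, and no estimate or extension-theoretic argument beyond the already-established uniqueness is required.
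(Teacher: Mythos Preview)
Your argument is correct and takes a genuinely different route from the paper's. The paper proceeds via extension theory: it identifies the uniqueness of $Q$ with the coincidence of the Friedrichs and the Krein--von Neumann extensions of the symmetric operator $G_0$ defined by \eqref{GGG}, and then invokes two external results from \cite{KS} to deduce first that $Je^{-Q}=e^{Q}J$ on $D(e^{-Q})$ and then that this implies $JQ=-QJ$. Your proof instead exploits uniqueness directly, by constructing a second admissible GRS datum $(Q',\{\tilde e_n\})=(-JQJ,\{\varepsilon_nJe_n\})$ via conjugation by $J$ and forcing $Q'=Q$.

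Your approach is more elementary and entirely self-contained: it uses only the unitarity of $J$, the standard transformation rule $Jf(Q)J=f(JQJ)$ for the Borel functional calculus, and the hypothesis. No reference to the Friedrichs/Krein--von Neumann dichotomy or to \cite{KS} is needed. The domain bookkeeping you flag is routine: since $J$ is unitary, $D(e^{\pm Q'/2})=JD(e^{\mp Q/2})$, so $\tilde e_n=\varepsilon_nJe_n$ lies in $D(e^{Q'/2})\cap D(e^{-Q'/2})$ precisely because $e_n\in D(e^{Q/2})\cap D(e^{-Q/2})$. The paper's approach, on the other hand, situates the result within the extremal-extension framework that is used throughout Section~\ref{sec4} (in particular for the first/second type dichotomy when $Q$ is \emph{not} unique), so it meshes more naturally with the surrounding machinery even though it is heavier for this single statement.
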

\begin{proof} 
Separating the sequence  $\{\phi_n\}$ by the signs of $[\phi_n,\phi_n]$:
\begin{equation}\label{bebe95}
\phi_{n}=\left\{\begin{array}{l}
\phi_{n}^+ \quad \mbox{if} \quad [\phi_{n},\phi_{n}]=1, \\
\phi_{n}^- \quad \mbox{if} \quad [\phi_{n}, \phi_{n}]=-1
\end{array}\right.
\end{equation}
we obtain two sequences of positive $\{\phi_n^+\}$ and negative $\{\phi_n^-\}$  elements.
Denote by  ${\mathfrak L}_+^0$ and ${\mathfrak L}_-^0$  the closure (in the Hilbert space $\cH$) of
the linear spans generated by the sets $\{\phi_n^+\}$ and $\{\phi_n^-\}$, respectively. 
By construction, ${\mathfrak L}_\pm^0$ are  positive/negative subspaces and their  $J$-orthogonal sum 
${\mathfrak L}_+^0[\dot{+}]{\mathfrak L}_-^0$ coincides with the domain of the closure $\overline{G}_0$ of $G_0$ determined
by \eqref{GGG} on $span\{\phi_n\}$ \cite[Lemma 4.1]{KS}\footnote{in \cite{KS}, the notation $G_0$ is used for $\overline{G}_0$}.  

By Proposition \ref{KKK5}, the uniqueness of $Q$  means that
the symmetric operator $\overline{G}_0$ has a unique extremal extension $G=e^{-Q}$.
This is possible when $G$ coincides with the Friedrichs extension of $G_0$ as well as with the Krein-von Neumann extension of $G_0$.  
This fact, by virtue of \cite[Theorem 4.3]{KS}, means that 
$Je^{-Q}f=e^QJf$  for  $f\in{D}(e^{-Q}).$  The last relation and  \cite[Theorem 2.1]{KS}  justify \eqref{new34}.  
\end{proof}
 
In view of Theorem \ref{new2},  it seems natural  to consider the anti-commutation relation \eqref{new34} in the  case
where $Q$ is not determined uniquely. Taking into account that \eqref{new34} is equivalent 
to the relation 
\begin{equation}\label{KKK9}
JGf=G^{-1}Jf, \qquad f\in{D}(G), 
\end{equation}
where $G=e^{-Q}$ is an extremal extension of $G_0$, we reduce the choice of $Q$ which satisfies  \eqref{new34}  to the choice
of an extremal extension $G$ satisfying \eqref{KKK9}. 

If extremal extensions $G$ of $G_0$ are not determined uniquely, then
 not each $Q=-\ln{G}$ will  anticommute necessarily with $J$.
 In particular, the operator $Q$ that corresponds to the Friedrichs extension $G=e^{-Q}$ of $G_0$ does not satisfy \eqref{new34}
  \cite{KKS}.   

\begin{defn}\label{newdef}
A complete $J$-orthonormal sequence $\{\phi_n\}$ is of the first type if 
there exists a  self-adjoint operator $Q$ such that the formulas
 \eqref{new15} hold with the additional property $JQ=-QJ$.
 Otherwise,  $\{\phi_n\}$ is of the second type.
\end{defn}

In view of Theorem \ref{new2}, each complete $J$-orthonormal 
sequence $\{\phi_n\}$ with the unique operator $Q$ in \eqref{new15} is the first type.
In particular, every $J$-orthonormal basis is a first type sequence.  The example of 
a second type sequence can be found in  \cite[Subsection 6.2]{KKS}.
In what follows, considering a first type sequence, we assume that $Q$  anti-commutes with $J$.

\begin{prop}\label{new2b}
Let a complete sequence  $\{\phi_n\}$ be a GRS. The following are equivalent: 
\begin{enumerate}
\item[(i)]  $\{\phi_n\}$ is the first type; 
\item[(ii)]   the operator $Q$ in \eqref{new15} can be chosen in such a way that $JQ=-QJ$ and the vectors $e_n$ are eigenvectors of $J$
(i.e., $Je_n=e_n$ or $Je_n=-e_n$).
\end{enumerate}
\end{prop}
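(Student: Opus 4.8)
The plan is to observe that the implication $(ii)\Rightarrow(i)$ is immediate: if $Q$ can be chosen with $JQ=-QJ$, then this is precisely the defining condition for $\{\phi_n\}$ to be of the first type in Definition \ref{newdef}, and the extra requirement that the $e_n$ be eigenvectors of $J$ is simply dropped. All the content lies in proving $(i)\Rightarrow(ii)$: starting from \emph{any} admissible $Q$ that witnesses the first type property, I would show that the associated ONB $\{e_n\}$ is \emph{automatically} composed of eigenvectors of $J$, so that the very same $Q$ also verifies $(ii)$.

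For $(i)\Rightarrow(ii)$, fix $Q$ with $JQ=-QJ$ and $\phi_n=e^{Q/2}e_n$, $\psi_n=e^{-Q/2}e_n$. The first step is to promote the anticommutation on $D(Q)$ to the intertwining relations
\begin{equation}
Je^{Q/2}=e^{-Q/2}J, \qquad Je^{-Q/2}=e^{Q/2}J,
\end{equation}
valid on $D(e^{Q/2})$ and $D(e^{-Q/2})$ respectively; this follows by expanding $JQ^k=(-1)^kQ^kJ$ and passing to the functional calculus, using that $J$ leaves the relevant domains invariant (the convention of the Remark following \eqref{new34}). In particular $Je_n\in D(e^{Q/2})\cap D(e^{-Q/2})$, so every expression below is well defined.

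The second step is to compute $\psi_n$ in two ways. On one hand $\psi_n=e^{-Q/2}e_n$ by \eqref{new15}. On the other hand, using the biorthogonal formula \eqref{new4} and writing $\epsilon_n:=[\phi_n,\phi_n]=\pm1$,
\begin{equation}
\psi_n=\epsilon_n J\phi_n=\epsilon_n Je^{Q/2}e_n=\epsilon_n e^{-Q/2}Je_n.
\end{equation}
Equating the two expressions gives $e^{-Q/2}e_n=\epsilon_n e^{-Q/2}Je_n$. Since $e^{-Q/2}$ is a positive self-adjoint operator it is injective, and because both $e_n$ and $Je_n$ lie in $D(e^{-Q/2})$ I may cancel it, obtaining $e_n=\epsilon_n Je_n$, that is $Je_n=\epsilon_n e_n$ (as $\epsilon_n^2=1$). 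Thus each $e_n$ is an eigenvector of $J$ with eigenvalue $[\phi_n,\phi_n]$, which is exactly $(ii)$.

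I expect the only delicate point to be the rigorous handling of the unbounded case: one must justify the intertwining identities and the invariance of $D(e^{\pm Q/2})$ under $J$ directly from the spectral calculus and the convention adopted for \eqref{new34}, and then confirm that $Je_n$ genuinely belongs to $D(e^{-Q/2})$ so that the cancellation of the injective operator $e^{-Q/2}$ is legitimate. Everything else is a short algebraic manipulation.
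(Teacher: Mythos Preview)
Your argument for $(i)\Rightarrow(ii)$ is correct and is essentially identical to the paper's: the paper also computes $J\phi_n$ via the intertwining $Je^{Q/2}=e^{-Q/2}J$ and via \eqref{new4}, compares, and cancels $e^{-Q/2}$ to obtain $Je_n=[\phi_n,\phi_n]e_n$.

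However, your treatment of $(ii)\Rightarrow(i)$ has a genuine gap. You claim this direction is immediate because the anticommutation $JQ=-QJ$ ``is precisely the defining condition'' for first type, and that the eigenvector hypothesis on the $e_n$ can be ``simply dropped''. This overlooks that Definition~\ref{newdef} is formulated only for complete \emph{$J$-orthonormal} sequences, whereas the hypothesis of Proposition~\ref{new2b} is merely that $\{\phi_n\}$ is a complete GRS. Before invoking Definition~\ref{newdef} you must first verify that $\{\phi_n\}$ is $J$-orthonormal, and this is exactly where the eigenvector condition is used. Indeed, from $JQ=-QJ$ alone one gets
\[
[\phi_n,\phi_m]=\langle Je^{Q/2}e_n,\,e^{Q/2}e_m\rangle=\langle e^{-Q/2}Je_n,\,e^{Q/2}e_m\rangle=\langle Je_n,e_m\rangle,
\]
and this equals $\pm\delta_{nm}$ precisely when $Je_n=\pm e_n$. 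Without the eigenvector assumption the implication is false: take any ONB $\{e_n\}$ that is \emph{not} made of eigenvectors of $J$ and any self-adjoint $Q$ anticommuting with $J$ (even $Q=0$ will do); then $\phi_n=e^{Q/2}e_n$ is a complete GRS with $JQ=-QJ$, yet $\{\phi_n\}$ is not $J$-orthonormal and hence cannot be of the first type. The paper's proof of $(ii)\Rightarrow(i)$ consists exactly of the one-line $J$-orthonormality check above; you should include it rather than discard the eigenvector hypothesis.
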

\begin{proof} 
$(i)\to(ii)$.  
By virtue of \eqref{new15} and \eqref{new4},
$$
J\phi_n=Je^{Q/2}e_n=e^{-Q/2}Je_n=[\phi_n, \phi_n]\psi_n=[\phi_n, \phi_n]e^{-Q/2}e_n.
$$
Comparing the third and the fifth terms in the equality above we get $Je_n=[\phi_n, \phi_n]e_n$ that implies $(ii)$.

$(ii)\to(i)$.  Since $\{\phi_n\}$ is complete in $\Hil$ by assumption, it suffices  to verify the $J$-orthonormality of $\{\phi_n\}$:  
$[\phi_n, \phi_m]=\langle Je^{Q/2}e_n, e^{Q/2}e_m\rangle =\langle e^{-Q/2}Je_n, e^{Q/2}e_m\rangle =[e_n, e_n]\delta_{nm}.$
\end{proof}

\begin{rem}
The studies of the first type sequences began in \cite{KKS}, where they were called ``quasi bases''. 
Proposition \ref{new2b} is a part of \cite[Theorem 6.3]{KKS}. We present here a simpler proof.
\end{rem}

For the first type sequence,  the  inner product in  $(\cH_{-Q}, \langle \cdot, \cdot\rangle _{-Q})$ 
is \emph{directly determined by the known indefinite inner product}
$[\cdot, \cdot]$, see \eqref{AGH25} below.
 Let us briefly explain this important fact  (see \cite{KKS} for details).

Since  $Q$  anticommutes with $J$, the $J$-orthogonal sum
${\mathfrak L}_+^0[\dot{+}]{\mathfrak L}_-^0$ of the subspaces ${\mathfrak L}_\pm^0$ defined in the proof
of Theorem \ref{new2} can be extended to the $J$-orthogonal sum
$$
D(G)=D(e^{-Q})={\mathfrak L}_+[\dot{+}]{\mathfrak L}_-, \qquad  {\mathfrak L}_\pm^0\subset{\mathfrak L}_\pm,  
$$
where ${\mathfrak L}_\pm$ are maximal positive/negative subspaces in the Krein space $(\cH, [\cdot,\cdot])$ and 
they are uniquely determined by the choice of $Q$:  ${\mathfrak L}_\pm=(I\pm{Je^{-Q}})D(e^{-Q})$.  The last relation
and \eqref{new1} imply that for $f=(I+Je^{-Q})u$ and  $g=(I+Je^{-Q})v$ from ${\mathfrak L}_+$:
\begin{eqnarray*}\fl
[f, g]=\langle Jf, g\rangle =\langle J(I+Je^{-Q})u, (I+Je^{-Q})v\rangle =2([u, v]+\langle e^{-Q}u, v\rangle ) & = & \\
\langle e^{-Q}(I+Je^{-Q})u, (I+Je^{-Q})v\rangle =\langle e^{-Q}f, g\rangle =\langle f,g\rangle _{-Q}.
\end{eqnarray*}
Therefore, the indefinite inner product $[\cdot,\cdot]$ 
coincides with $\langle \cdot,\cdot\rangle _{-Q}$ on ${\mathfrak L}_+$. Similar calculations show that $[\cdot,\cdot]$ coincides with   
$-\langle \cdot,\cdot\rangle _{-Q}$ on ${\mathfrak L}_-$.  Moreover,  the subspaces ${\mathfrak L}_\pm$ are orthogonal with respect 
to $\langle \cdot,\cdot\rangle _{-Q}$ since
$$
\langle f, \gamma\rangle _{-Q}=\langle e^{-Q}(I+Je^{-Q})u, (I-Je^{-Q})w\rangle =0, 
$$
where  $\gamma=(I-Je^{-Q})w$ and  $u, w\in{D(e^{-Q})}$. This leads to the conclusion that
\begin{equation}\label{AGH14}
\cH_{-Q}=\widehat{{\mathfrak L}}_+[\oplus_{-Q}]\widehat{\mathfrak L}_-,
\end{equation}
where $\widehat{{\mathfrak L}}_\pm$ are the completion of the pre-Hilbert spaces $({{\mathfrak L}}_\pm, \pm[\cdot,\cdot])$
and $[\oplus_{-Q}]$ indicates the orthogonality with respect to $\langle \cdot,\cdot\rangle _{-Q}$ and with respect to $[\cdot, \cdot]$. 
Keeping the same notation  for the extension of $[\cdot,\cdot]$ onto $\cH_{-Q}$ we obtain
the new Krein space $(\cH_{-Q}, [\cdot,\cdot])$ with the  fundamental decomposition \eqref{AGH14}.
For every $f,g\in\cH_{-Q}$ \ ($f_\pm, g_\pm\in\widehat{{\mathfrak L}}_\pm$),
\begin{equation}\label{AGH25}  
\langle f, g\rangle _{-Q}=[f_+,g_+] - [f_-,g_-].    
\end{equation}

For the second type sequences,  there are no operators $Q$ in \eqref{new15} 
which anticommute with $J$. The space $\cH_{-Q}$ cannot be presented as in \eqref{AGH14}. 
This implies that $\langle \cdot,\cdot\rangle _{-Q}$ cannot be directly expressed via  $[\cdot,\cdot]$
and one should apply  much more efforts for calculation of $\langle \cdot,\cdot\rangle _{-Q}$.

Let $\{\phi_n\}$ be the first type.  Then the linear manifold $\mathcal{G}={D}(e^{Q/2})\cap{D}(e^{-Q/2})$  in 
Proposition \ref{new25} is invariant with respect to $J$ and the formula \eqref{e1} can be rewritten as
$$
[f, g]=\sum_{n=1}^{\infty}\delta_n[f, \phi_n][\phi_n, g]=\sum_{n=1}^{\infty}\delta_n[f, \psi_n][\psi_n, g],  \qquad   f,g\in\mathcal{G},
$$
where $\delta_n=[\phi_n,\phi_n]=[\psi_n,\psi_n]$.
Moreover, for all $f\in\cH_{-Q}$,
\begin{equation}\label{AGH41}
f=\sum_{n=1}^\infty\delta_n[f, \phi_n]\phi_n,
\end{equation}
where the series converges in $(\cH_{-Q}, \langle \cdot,\cdot\rangle _{-Q})$.  Indeed, 
$f=\sum_{n=1}^\infty\langle f, \phi_n\rangle _{-Q}\phi_n$ since
 $\{\phi_n\}$ is  ONB of $\cH_{-Q}$ (Proposition \ref{new11}). By virtue of \eqref{bebe95}, \eqref{AGH14} and \eqref{AGH25},
$$
\langle f, \phi_n\rangle _{-Q}=\left\{\begin{array}{c}
[f, \phi_n^+] \quad (\mbox{if} \ \phi_n=\phi_n^+); \\
 -[f, \phi_n^-]  \quad (\mbox{if} \ \phi_n=\phi_n^-)
 \end{array}\right. =\delta_n[f, \phi_n]
$$
 that implies \eqref{AGH41}. 

Assume that  $\sum_{n=1}^\infty{c_n}\phi_n$  converges to an element $f$ in $(\cH, \langle \cdot,\cdot\rangle )$. 
In this case, due to \eqref{new4}, $c_n=\langle f,\psi_n\rangle =\delta_n[f,\phi_n].$
In general, we cannot state that this series  converges  unconditionally in $\cH$.
Denote
$$
\mathcal{D}_{un}=\{f\in\cH :  \ \mbox{the series} \ \sum_{n=1}^\infty\delta_n[f,\phi_n]\phi_n  \ \mbox{converges unconditionally to}  \ f \ \mbox{in} \ \cH\}.
$$

\begin{prop}\label{fr75}
Let $\{\phi_n\}$ be a first type sequence. Then 
$\mathcal{D}_{un}\subseteq\mathfrak{L}_{+}^0\dot{+}\mathfrak{L}_{-}^0$,
where $\mathfrak{L}_{\pm}^0$ are defined in the proof of Theorem \ref{new2}.
\end{prop}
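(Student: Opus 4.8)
The plan is to use the \emph{unconditional} convergence built into the definition of $\mathcal{D}_{un}$ to split the expansion of any $f\in\mathcal{D}_{un}$ into positive and negative parts and to recognise each part as an element of the corresponding closed subspace $\mathfrak{L}_\pm^0$. First I would fix $f\in\mathcal{D}_{un}$ and record that the series $\sum_n c_n\phi_n$, with coefficients $c_n=\delta_n[f,\phi_n]=\langle f,\psi_n\rangle$ (by \eqref{new4}), converges unconditionally to $f$ in $\cH$. I would also recall from the proof of Theorem \ref{new2} that $\mathfrak{L}_+^0$ and $\mathfrak{L}_-^0$ are the closures in $\cH$ of the linear spans of $\{\phi_n^+\}$ and $\{\phi_n^-\}$, the positive and negative elements singled out in \eqref{bebe95}.

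The crux is the regrouping of the series according to the sign of $\delta_n$. Writing $\mathbb{N}=N_+\sqcup N_-$ with $N_\pm=\{n:\delta_n=\pm1\}$, unconditional convergence guarantees both that each subseries converges and that the total sum equals the sum of the two subseries; hence
\[
f=f_++f_-,\qquad f_\pm:=\sum_{n\in N_\pm}c_n\phi_n,
\]
the two limits being taken in $\cH$. This is exactly the point at which unconditionality is indispensable: for a series that converges only in the ordinary sense the subseries need not converge, and such a decomposition may fail. Capturing this is the whole content of the proposition.

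It then remains to locate the two pieces. Each partial sum of the series defining $f_+$ is a finite linear combination of vectors $\phi_n^+$, hence lies in $\mathrm{span}\{\phi_n^+\}$; since $\mathfrak{L}_+^0$ is closed and $f_+$ is its $\cH$-limit, $f_+\in\mathfrak{L}_+^0$, and symmetrically $f_-\in\mathfrak{L}_-^0$. Therefore $f=f_++f_-\in\mathfrak{L}_+^0\dot{+}\mathfrak{L}_-^0$, the sum being direct---indeed $J$-orthogonal, by continuity of $[\cdot,\cdot]$ together with the $J$-orthonormality of $\{\phi_n\}$---because any nonzero common element of $\mathfrak{L}_+^0$ and $\mathfrak{L}_-^0$ would be simultaneously positive and negative. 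As $f\in\mathcal{D}_{un}$ was arbitrary, the inclusion follows. I do not anticipate a genuine obstacle beyond a careful appeal to the subseries/grouping property of unconditional convergence.
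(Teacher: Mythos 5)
Your proposal is correct and follows essentially the same route as the paper's own proof: both split the unconditionally convergent expansion $\sum_n\delta_n[f,\phi_n]\phi_n$ according to the sign of $\delta_n$, invoke the subseries property of unconditional convergence (the paper cites Heil's Theorem 3.10 for exactly this) to obtain convergent pieces $f_\pm$, and place $f_\pm$ in $\mathfrak{L}_\pm^0$ by closedness of those subspaces. Your additional remark on the directness/$J$-orthogonality of the sum is a harmless elaboration of what the paper already established in the proof of Theorem \ref{new2}.
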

\begin{proof}
Let $f\in\mathcal{D}_{un}$. Then, simultaneously with \eqref{AGH41}, the series 
$$
\sum_{n}^\infty[f, \phi_n^+]\phi_n^+ \qquad  \mbox{and} \qquad  -\sum_{n}^\infty[f, \phi_n^-]\phi_n^- 
$$
(the vectors $\phi_n^\pm$ are determined by \eqref{bebe95}) converge to elements $f_\pm$ in the Hilbert space
$\cH$ (see, e.g., \cite[Theorem 3.10]{Heil}).  By the construction $f_\pm\in\mathfrak{L}_\pm^0$. Therefore,
$f=f_++f_-$ belongs to  $\mathfrak{L}_{+}^0\dot{+}\mathfrak{L}_{-}^0$.
\end{proof}   

\subsection{$J$-orthonormal sequences and operators of $\cC$-symmetry}\label{sec3.3}

We say that an $J$-orthonormal sequence $\{\phi_n\}$  \emph{generates a  $\cC$-symmetry operator
  $\cC=Je^{-Q}=e^{Q}J$} (the operator $Q$ anti-commutes with $J$) if  
$$
\cC\phi_n^+=\phi_n^+, \qquad  \cC\phi_n^-=-\phi_n^-,
$$
where $\phi_n^\pm$ are defined in \eqref{bebe95}.

With each operator $\cC$ one can associate a Hilbert space  $(\cH_{-Q}, \langle \cdot, \cdot\rangle _{-Q})$ (see Subsection \ref{sec4a}).
In view of \eqref{AGHNEW},
$$
\langle \phi_n, \phi_m\rangle _{-Q}=[\cC{\phi_n}, \phi_m]=\left\{\begin{array}{c}
[\phi_n^+, \phi_m] \quad (\mbox{if} \ \phi_n=\phi_n^+) \\
 -[\phi_n^-, \phi_m]  \quad (\mbox{if} \ \phi_n=\phi_n^-)
 \end{array}\right. =\delta_{nm}.
$$
Therefore, $\{\phi_n\}$ is an orthonormal system in  $\cH_{-Q}$.

Proposition \ref{NEE} and Corollary \ref{NEE1} follow from \cite[Sections 5, 6]{KKS}.
\begin{prop}\label{NEE}
Each complete $J$-orthonormal sequence $\{\phi_n\}$ generates at least one operator of $\cC$-symmetry. 
The sequence $\{\phi_n\}$ is the first type if and only if it generates  a $\cC$-symmetry operator  $\cC=e^{Q}J$
such that $\{\phi_n\}$ is an ONB of $\cH_{-Q}$. This operator $\cC$ is determined uniquely or 
there are infinitely many such operators.  
\end{prop}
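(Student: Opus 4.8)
The plan is to split the statement into three assertions and reduce each to the machinery already assembled in Sections \ref{sec2} and \ref{sec4}: existence of at least one generated $\cC$-symmetry, the first-type characterization, and the uniqueness-or-infinitude dichotomy.

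For existence, first I would record that generating a $\cC$-symmetry is the same as extending $G_0$. Since $\cC=e^QJ$ with $JQ=-QJ$, the requirement $\cC\phi_n^{\pm}=\pm\phi_n^{\pm}$ reads $J\phi_n^+=e^{-Q}\phi_n^+$ and $J\phi_n^-=-e^{-Q}\phi_n^-$, which by \eqref{new4} and \eqref{GGG} is exactly $e^{-Q}\phi_n=G_0\phi_n$; thus a generated $\cC$-symmetry corresponds to a positive self-adjoint extension $G=e^{-Q}$ of $G_0$ obeying \eqref{KKK9}. A one-line computation, $JG_0\phi_n=[\phi_n,\phi_n]\phi_n=G_0^{-1}J\phi_n$, shows $G_0$ already carries the symmetry on its domain. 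Existence then follows from the subspace correspondence stated before Definition \ref{newdef}: the closed spans $\mathfrak{L}_\pm^0$ of $\{\phi_n^{\pm}\}$ form a $J$-orthogonal pair of a positive and a negative subspace, and by the standard Krein-space extension theory every such dual pair embeds in a maximal dual pair $(\mathfrak{L}_+,\mathfrak{L}_-)$; the associated $\cC=e^QJ$ then satisfies $\cC\phi_n^{\pm}=\pm\phi_n^{\pm}$ by construction.

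For the first-type characterization, assume in the forward direction that $\{\phi_n\}$ is first type and pick $Q$ with $JQ=-QJ$ as in Definition \ref{newdef}. Proposition \ref{new2b} gives $Je_n=[\phi_n,\phi_n]e_n$, whence, using $Je^{Q/2}=e^{-Q/2}J$,
\[\cC\phi_n=e^QJe^{Q/2}e_n=e^{Q/2}Je_n=[\phi_n,\phi_n]\,e^{Q/2}e_n=[\phi_n,\phi_n]\phi_n,\]
so $\cC$ is generated by $\{\phi_n\}$, and $\{\phi_n\}$ is an ONB of $\cH_{-Q}$ by Proposition \ref{new11}. Conversely, suppose $\{\phi_n\}$ generates some $\cC=e^QJ$ with $JQ=-QJ$ and is an ONB of $\cH_{-Q}$. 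By \eqref{AGH14b} the map $e^{-Q/2}$ is isometric from $\cH_{-Q}$ into $\cH$, and its range being dense it extends to a unitary $U\colon\cH_{-Q}\to\cH$; hence $e_n:=e^{-Q/2}\phi_n=U\phi_n$ form an ONB of $\cH$, and $\phi_n=e^{Q/2}e_n$ with $JQ=-QJ$ exhibits $\{\phi_n\}$ as first type.

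The dichotomy is the hard part. Combining the reduction above with the extremality criterion of Subsection \ref{subsection2.2} (extremality of $e^{-Q}$ is equivalent to completeness of $\{\phi_n\}$ in $\cH_{-Q}$), the $\cC$-symmetries generated by $\{\phi_n\}$ for which $\{\phi_n\}$ is an ONB of $\cH_{-Q}$ are in bijection with the extremal extensions $G=e^{-Q}$ of $G_0$ satisfying \eqref{KKK9}. I would analyze these through the von Neumann parametrization of the symmetric extensions of $\overline{G}_0$ compatible with $J$: if the relevant deficiency space is trivial, then $\overline{G}_0$ is self-adjoint, the extension is forced, and $\cC$ is unique (the situation of Theorem \ref{new2}); if it is nontrivial, the admissible $J$-symmetric extremal extensions are carried into one another by a nontrivial group of $J$-compatible rotations of the deficiency space, producing a continuum. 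The main obstacle is precisely to rule out any finite intermediate case, i.e.\ to show that a single degree of freedom in the extension already forces infinitely many choices; this amounts to verifying that the parametrizing set of admissible maximal dual pairs in the residual Krein space, whenever nonempty and not a singleton, is a continuum.
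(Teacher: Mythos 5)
First, a point of comparison: the paper does not prove Proposition \ref{NEE} at all — it is imported from \cite[Sections 5, 6]{KKS} — so your attempt has to be judged as a reconstruction of that reference. Your first two claims are correct and essentially follow the intended mechanism. The reduction ``$\{\phi_n\}$ generates $\cC=e^QJ$'' $\Leftrightarrow$ ``$e^{-Q}$ is a positive self-adjoint extension of $G_0$ satisfying \eqref{KKK9}'' is right (your computation via \eqref{new4} and \eqref{GGG}), and existence then follows from the classical theorem of Phillips on extending a dual pair of definite subspaces to a maximal dual pair, combined with the correspondence between maximal dual pairs and operators $\cC=Je^{-Q}=e^QJ$ stated in Subsection \ref{sec4a}; you should cite Phillips (or Azizov--Iokhvidov) explicitly rather than appeal to ``standard extension theory,'' and you are implicitly using, as the paper does in the proof of Theorem \ref{new2}, that the closed spans $\mathfrak{L}^0_\pm$ are genuinely definite. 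The equivalence ``first type $\Leftrightarrow$ generates a $\cC$ with $\{\phi_n\}$ an ONB of $\cH_{-Q}$'' is also sound: forward via Propositions \ref{new2b} and \ref{new11}, converse via the unitary closure of $e^{-Q/2}$. In the converse, Definition \ref{newdef} asks for both formulas of \eqref{new15}; the missing one is the single line $e^{-Q/2}e_n=e^{-Q}\phi_n=J\cC\phi_n=[\phi_n,\phi_n]J\phi_n=\psi_n$, which uses precisely that $\cC$ is generated by $\{\phi_n\}$.

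The genuine gap is the third claim — $\cC$ is unique or there are infinitely many — and you flag it yourself. Your reduction of that claim is fine: by Subsection \ref{subsection2.2}, the admissible operators correspond to the \emph{extremal} extensions $G=e^{-Q}$ of $G_0$ satisfying \eqref{KKK9}, i.e.\ to the fixed points of the involution $G\mapsto JG^{-1}J$ on the set of extremal extensions. But from there, ``von Neumann parametrization plus a group of $J$-compatible rotations of the deficiency space'' is a restatement of the desired conclusion, not an argument. Two concrete obstructions: (a) while the set of all nonnegative self-adjoint extensions is either a singleton or a continuum, the subset of extremal ones is neither affine nor convex in the von Neumann (or Birman--Krein--Vishik) parameters, so nothing yet excludes that the constraint \eqref{KKK9} is met by, say, exactly two of them; (b) a fixed-point set of an involution carries no dichotomy by itself. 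What is actually needed — and what \cite{KKS} supplies — is a construction that, given two distinct admissible operators $\cC_1\neq\cC_2$ (equivalently, two distinct maximal dual pairs extending $(\mathfrak{L}^0_+,\mathfrak{L}^0_-)$ and compatible with extremality), produces an infinite family of admissible ones, e.g.\ by showing that the admissible angle operators form a set stable under nontrivial interpolation. Until that construction is exhibited, your proposal proves only the first two sentences of the proposition.
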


Obviously, if $\{\phi_n\}$ is the first type, then its bi-orthogonal sequence $\{\psi_n\}$
is also the first type. 

 \begin{cor}\label{NEE1}
If $\{\phi_n\}$ is the first type and $\{\lambda_n\}$ are real numbers, then
there exists a $\cC$-symmetry operator  $\cC=e^{Q}J$ such that the operators 
$H_{\phi,\psi}$ and $H_{\psi,\phi}$ defined in \eqref{NE1} on the domains
 $D(H_{\phi,\psi})=\mbox{span}\{\phi_n\}$ and $D(H_{\psi,\phi})=\mbox{span}\{\psi_n\}$, respectively are essentially self-adjoint in the Hilbert spaces 
$\cH_{-Q}$ and $\cH_{Q}$.  The spectra of  $H_{\phi,\psi}$ and $H_{\psi,\phi}$ coincides with the closure of $\{\lambda_n\}$.
\end{cor}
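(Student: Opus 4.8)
The plan is to transport the whole problem into the Hilbert space $\cH_{-Q}$, where $H_{\phi,\psi}$ turns into a diagonal operator with real eigenvalues. Since $\{\phi_n\}$ is of the first type, Proposition \ref{NEE} furnishes a $\cC$-symmetry operator $\cC=e^{Q}J$ for which $\{\phi_n\}$ is an ONB of $\cH_{-Q}$; this is the operator I would fix throughout. First I would read off the action of $H_{\phi,\psi}$ on its natural domain $\mbox{span}\{\phi_n\}$: the biorthogonality $\langle\phi_m,\psi_n\rangle=\delta_{mn}$ built into \eqref{NE1} gives $H_{\phi,\psi}\phi_m=\lambda_m\phi_m$, so each $\phi_m$ is an eigenvector with the real eigenvalue $\lambda_m$.

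The bridge between the two inner products is the identity $\langle f,\psi_n\rangle=\langle f,\phi_n\rangle_{-Q}$, valid for $f\in D(e^{-Q/2})$: indeed, by \eqref{new1} and \eqref{new15}, $\langle f,\phi_n\rangle_{-Q}=\langle e^{-Q/2}f,e^{-Q/2}\phi_n\rangle=\langle e^{-Q/2}f,e_n\rangle=\langle f,e^{-Q/2}e_n\rangle=\langle f,\psi_n\rangle$. Consequently $H_{\phi,\psi}f=\sum_n\lambda_n\langle f,\phi_n\rangle_{-Q}\,\phi_n$ is exactly the diagonal operator attached to the ONB $\{\phi_n\}$ of $\cH_{-Q}$ and the real sequence $\{\lambda_n\}$. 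Symmetry with respect to $\langle\cdot,\cdot\rangle_{-Q}$ is then immediate, since $\langle H_{\phi,\psi}\phi_n,\phi_m\rangle_{-Q}=\lambda_n\delta_{nm}=\langle\phi_n,H_{\phi,\psi}\phi_m\rangle_{-Q}$ uses only that the $\lambda_n$ are real, and $\mbox{span}\{\phi_n\}$ is dense in $\cH_{-Q}$ by Lemma \ref{new23}.

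For essential self-adjointness I would invoke the standard fact that a diagonal symmetric operator $u_n\mapsto\lambda_n u_n$ ($\lambda_n\in\mathbb{R}$) defined on the finite span of an ONB is essentially self-adjoint, and verify it by showing the deficiency spaces are trivial: if $g\perp_{-Q}\mathcal{R}(H_{\phi,\psi}\mp iI)$, then $\overline{(\lambda_n\mp i)}\langle g,\phi_n\rangle_{-Q}=0$ for every $n$, whence $\langle g,\phi_n\rangle_{-Q}=0$ and $g=0$ by completeness of $\{\phi_n\}$ in $\cH_{-Q}$. The closure $\overline{H_{\phi,\psi}}$ is the self-adjoint multiplication operator with domain $\{f\in\cH_{-Q}:\sum_n\lambda_n^2|\langle f,\phi_n\rangle_{-Q}|^2<\infty\}$. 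Its spectrum is the closure of $\{\lambda_n\}$: for $z$ outside that closure one has $\inf_n|\lambda_n-z|>0$, so the resolvent $\mbox{diag}((\lambda_n-z)^{-1})$ is bounded and $z$ is regular, while each $\lambda_n$ is a genuine eigenvalue and every accumulation point makes the resolvent unbounded, hence lies in the spectrum.

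The statement for $H_{\psi,\phi}$ is carried out in complete parallel inside $\cH_{Q}$, using that $\{\psi_n\}$ is an ONB of $\cH_{Q}$ (Proposition \ref{new11}), the eigenrelation $H_{\psi,\phi}\psi_m=\lambda_m\psi_m$, and the dual identity $\langle g,\phi_n\rangle=\langle g,\psi_n\rangle_{Q}$. The only delicate point, and the main obstacle, is the bookkeeping between the ambient inner product $\langle\cdot,\cdot\rangle$ in which $H_{\phi,\psi}$ is defined and the completion inner product $\langle\cdot,\cdot\rangle_{-Q}$, together with ensuring that $\mbox{span}\{\phi_n\}$ and $\mbox{span}\{\psi_n\}$ are \emph{densely} embedded in $\cH_{-Q}$ and $\cH_{Q}$; this density is precisely what the choice of $\cC$ from Proposition \ref{NEE} secures, since that $\cC$ is the one rendering $\{\phi_n\}$ an ONB of $\cH_{-Q}$.
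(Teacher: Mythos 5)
Your proposal is correct and follows essentially the same route as the paper, which proves Corollary \ref{NEE1} by citation to \cite[Sections 5, 6]{KKS} and whose subsequent remark indicates precisely your mechanism: with the $\cC$-symmetry from Proposition \ref{NEE}, the sequences $\{\phi_n\}$ and $\{\psi_n\}$ become ONBs of $\cH_{-Q}$ and $\cH_{Q}$, so $H_{\phi,\psi}$ and $H_{\psi,\phi}$ are diagonal operators with real eigenvalues on spans of ONBs, hence essentially self-adjoint with spectrum equal to $\overline{\{\lambda_n\}}$. Your write-up in fact supplies the standard details (the bridge identity $\langle f,\psi_n\rangle=\langle f,\phi_n\rangle_{-Q}$, triviality of deficiency spaces, and the spectral computation) that the paper leaves to the reference.
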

 
\begin{rem}
Corollary \ref{NEE1} can be easy extended for the general case of GRS $\{\phi_n=e^{Q/2}e_n\}$. 
Indeed, in view of Proposition \ref{new11}, $\{\phi_n\}$  and $\{\psi_n\}$  are ONB of the Hilbert spaces $\cH_{-Q}$ and $\cH_{Q}$, respectively. 
Therefore, the operators  $H_{\phi,\psi}$ and $H_{\psi,\phi}$ defined on $\mbox{span}\{\phi_n\}$ and $\mbox{span}\{\psi_n\}$
have to be essentially self-adjoint in $\cH_{-Q}$ and $\cH_{Q}$. 
 This approach can be used for $J$-orthonormal sequences of the second type. 
The principal difference is:
for the first type sequence, the new scalar product in  $\cH_{-Q}$ is  directly determined by the known  indefinite inner product
$[\cdot, \cdot]$, see \eqref{AGH25}. 
For the second type sequence,  the inner product  $\langle \cdot,\cdot\rangle _{-Q}$ cannot be expressed via  $[\cdot, \cdot]$
 and it becomes more complicated to determine $\langle \cdot,\cdot\rangle _{-Q}$.
\end{rem}
 
\section{Examples}\label{sec5}

\subsection{Eigenfunctions of the shifted harmonic oscillator}
 
 In the space $\Lc^2(\mathbb{R})$ we define a fundamental symmetry $J$ as  the space parity operator $\mathcal{P}f(x)=f(-x)$.
 The  indefinite inner product is
 $$
 [f, g]= \langle \mathcal{P}f, g\rangle =\int_{-\infty}^\infty{f(-x)}\overline{g(x)}dx.
 $$
  The Hermite functions $e_n(x)$ in \eqref{AGH88} form an ONB of 
  $\Lc^2(\mathbb{R})$ and  $\mathcal{P}e_n=(-1)^ne_n$.
 
 Define  
\begin{equation}\label{AGH100}
 \phi_n(x)=e_n(x+ia),  \quad \psi_n(x)=e_n(x-ia),  \quad    a\in\mathbb{R}\setminus\{0\}, \quad n=0,1,2,\ldots
\end{equation}
using the fact that $e_n(x)$ are entire functions. The functions $\{\phi_n\}$ and $\{\psi_n\}$ are
eigenvectors of the shifted harmonic  oscillators  
$$
H=-\frac{d^2}{dx^2}+x^2+2iax  \quad  \mbox{and} \quad H^*=-\frac{d^2}{dx^2}+x^2-2iax,
$$
respectively.  It follows from \cite[Lemma 2.5]{Mit}  that $\{\phi_n\}$ and $\{\psi_n\}$ are regular biorthogonal
sequences and hence,  they are dual GRS (Theorem \ref{new31}).  

To find a self-adjoint operator $Q$ in \eqref{new15}, we calculate the Fourier transform
$F\phi_n=\frac{1}{\sqrt{2\pi}}\int_{-\infty}^\infty{e^{-ix\xi}}\phi_n(x)dx$.
In view of \eqref{AGH100},  $F\phi_n=e^{-a\xi}Fe_n$. 
Therefore,  $\phi_n=F^{-1}e^{-a\xi}Fe_n$.
The last relation can be rewritten as
\begin{equation}\label{AGH101}
\phi_n=e^{Q/2}e_n,  \qquad  e^{Q/2}=F^{-1}e^{-a\xi}F,
\end{equation}
where $Q=2ai\frac{d}{dx}$ is an unbounded self-adjoint operator in $\Lc^2(\mathbb{R})$ that anticommutes with $\mathcal{P}$.
By virtue of Proposition \ref{new2b},   $\{\phi_n\}$  and $\{\psi_n\}$ are $\mathcal{P}$-orhonormal sequences of the first type.
The operator of $\cC$-symmetry generated $\{\phi_n\}$ coincides with $\cC=e^{ai\frac{d}{dx}}\mathcal{P}$ and it is determined 
uniquely.
  
  \subsection{Eigenfunctions of perturbed anharmonic oscillator}

Let $\{e_n\}$ be eigenfunctions  of the anharmonic oscillator 
$$
H_\beta=-\frac{d^2}{dx^2} + |x|^\beta, \qquad  \beta\rangle 2
$$ 
Without loss of generality we assume that $\|e_n\|=1$. 
The eigenfunctions $\{e_n\}$ form an orthonormal basis in $\Lc^2(\mathbb{R})$
and they are eigenvectors of $\mathcal{P}$. 

Consider the sequences
 $$
 \phi_n(x)=e^{p(x)}e_n(x),  \qquad  \psi_n(x)=e^{-p(x)}e_n(x) 
 $$
where a real-valued odd function $p\in{C^2}(\mathbb{R})$ 
satisfies \cite[Assumption II]{Mit}. The functions 
$\{\phi_n\}$ and $\{\psi_n\}$ are
eigenvectors of the perturbed anharmonic  oscillators  
$$
H=H_\beta+2p'(x)\frac{d}{dx}+p''(x)- (p'(x))^2  \quad  \mbox{and} \quad H^*=H_\beta-2p'(x)\frac{d}{dx}-p''(x)- (p'(x))^2,
$$
respectively.  The functions  $\{\phi_n\}$ and $\{\psi_n\}$
are determined by \eqref{new15} with the operator of multiplication $Q=2p(x)$ in 
$\Lc^2(\mathbb{R})$ which anticommutes with ${\mathcal P}$.
 Proposition \ref{new2b} implies that  $\{\phi_n\}$ and $\{\psi_n\}$ are $\mathcal{P}$-orhonormal and they are sequences of the first type. 

\subsection{Back to the harmonic oscillator}

Let us go back to Example 1 in Section \ref{sec2}. The vectors $\phi_n(x)$ and $\psi_n(x)$ are respectively eigenstates of $H$ and $H^*$, where
$$
H=\frac{1}{2}\left(-\frac{d^2}{dx^2}\,-x\frac{d}{dx}+\frac{1}{2}\left(\frac{3x^2}{2}-1\right)\right):
$$
$H\phi_n=E_n\phi_n$ and $H^*\psi_n=E_n\psi_n$, where $E_n=n+\frac{1}{2}$. Simple computations show that $H^*=\frac{1}{2}\left(-\frac{d^2}{dx^2}\,+x\frac{d}{dx}+\frac{1}{2}\left(\frac{3x^2}{2}+1\right)\right)$, see \cite{BB}. The sets $\{\phi_n\}$ and $\{\psi_n\}$ are both complete in $\Lc^2(\Bbb R)$. 
Hence they are regular and dual GRS, as a consequence of Theorem \ref{new31}. 

The interesting aspect of this example is that the operator $Q=-\frac{x^2}{2}$ does not anticommute with
the space parity operator $\mathcal{P}$. In fact, we have $Q\mathcal{P}=\mathcal{P}Q$. 
Then Proposition \ref{new2b} suggests that the set $\{\phi_n\}$ cannot be $\mathcal{P}$-orthonormal. 
In fact, it is possible to check that this is what happens. To do that, we first notice that, because of the parity properties of the Hermite polynomials, we have
$$
[\phi_n,\phi_m]=\frac{(-1)^n}{\sqrt{2^{n+m}\,n!\,m!\,\sqrt{2}}}\,\int_{\Bbb R}H_n(x)\,H_m(x)e^{-\frac{3x^2}{2}}\,dx.
$$
This integral is zero if $n+m$ is odd. But, if $n+m$ is even, the result is non zero. In fact, after some computations, we get
$$
|[\phi_n,\phi_m]|=\sqrt{\frac{2^{n+m+1}}{3^{n+m+1}\,\pi\,n!\,m!}}\,\Gamma\left(\frac{n+m+1}{2}\right) {}_2F_1\left(-m,n;\frac{1-m-n}{2};\frac{3}{2}\right),
$$
where $\Gamma$ and ${}_2F_1$ are respectively the Gamma and the Hypergeometric functions. The result is not zero, if $n+m$ is even.
 Hence the set  $\{\phi_n\}$ is not $\mathcal{P}$-orthonormal, as expected. 

\section{Conclusions}
In this paper, motivated by many (already existing or) possible applications to 
$\mathcal{PT}$-quantum mechanics, we have derived some properties of vectors which are complete in a given Hilbert space, 
and orthonormal with respect to an indefinite inner product. 
In this analysis we have heavily used generalized Riesz systems and $\mathcal{G}$-quasi bases, 
and we have introduced two different types of $J$-orthonormal sequences, those of the first and those of the second type, 
depending on the validity of a certain anti-commutation relation between $J$ and $Q$. 
Examples of both kind have been proposed, all arising from harmonic or anharmonic oscillators.

Among our future projects we plan to study physical operators constructed, following 
\cite{BIT, Inoue1}, by the bi-orthogonal sets considered  along this paper and to check under which conditions a given Hamiltonian can be factorized. 
When this is possible, we will also consider the properties of its SUSY partner.

\section*{Acknowledgements}
FB   acknowledges support from the GNFM of Indam and from the University of Palermo, via CORI.
SK  acknowledges support from the Polish Ministry of Science and
Higher Education.

\section*{References}


\begin{thebibliography}{99}
\bibitem{AK_Arlin}  Arlinski\u{i}, Yu. M., Hassi, S., Sebesty\'{e}n, Z.,  de Snoo, H. S. V.: On the class of extremal extensions of a nonnegative operator. In:
Recent Advances in Operator Theory and Related Topics,  Volume 127 of Operator Theory: Advances and Applications, pp. 41--81. Birkh\"{a}user, Basel (2001)
\bibitem{bag2013JMP} Bagarello, F.: More mathematics on pseudo-bosons.  J. Math. Phys. {\bf 54}, 063512 (2013)
\bibitem{BAGPRA} F. Bagarello, {\em From self-adjoint to non self-adjoint harmonic oscillators: physical consequences
	and mathematical pitfalls}, Phys. Rev. A, {\bf 88}, 032120 (2013)

\bibitem{bagbook_thebook} Bagarello, F.,  Gazeau, J.-P.,  Szafraniec, F. H., Znojil, M. (eds.): Non-Selfadjoint Operators in Quantum Physics: Mathematical Aspects.  J. Wiley \& Sons,  (2015)
\bibitem{bagpsasstra} Bagarello F., Passante, R., Trapani, C.: Non-Hermitian Hamiltonians in Quantum Physics. In:
	Selected Contributions from the 15th International Conference on Non-Hermitian
	Hamiltonians in Quantum Physics, Palermo, Italy, 18-23 May 2015, Springer (2016)
\bibitem{BB} Bagarello, F.,  Bellomonte, G.: Hamiltonians defined by biorthogonal sets. J. Phys. A {\bf 50}, 145203 (2017)
\bibitem{BIT}  Bagarello, F., Inoue, H.,  Trapani, C.:  Biorthogonal vectors, sesquilinear forms and some physical operators,  J. Math. Phys. {\bf 59}, 033506 (2018)
\bibitem{BGS2} Bagarello, F.,  Gargano, F., Spagnolo, S.: Bi-squeezed states arising from pseudo-bosons.  J. Phys. A  {\bf 51}, 455204 (2018); doi.org/10.1088/1751-8121/aae165
\bibitem{bialo2017}  Bagarello, F.,  Gargano, F.,  Spagnolo, S.:  Two-dimensional non commutative  Swanson model and its bicoherent states, 
 Proceedings of the WGMP Conferences, 2017, Bialowieza,  Polonia, in press
\bibitem{B1}   Bender, C. M., Boettcher, S.: Real spectra in non-Hermitian Hamiltonians having ${\mathcal P}{\mathcal T}$-symmetry. Phys. Rev. Lett.  {\bf 80}, 5243--5246 (1998) 
\bibitem{BE} Bender, C.M., Dorey, P. E.,  Dunning, C.,  Fring, A.,  Hook, D. W., Jones H. F., Kuzhel, S.,  L{\'e}vai, G.,  Tateo, R.: $PT$ Symmetry In Quantum and Classical Physics. — London : World Scientific Publishing Europe Ltd., 2019 
https://doi.org/10.1142/q0178
\bibitem{D2} Dorey, P., Dunning, C., Tateo, R.:  Spectral equivalence, Bethe ansatz, and reality properties in ${\mathcal P}{\mathcal T}$-symmetric quantum mechanics:  J. Phys. A {\bf 34},  5679--5704 (2001)
\bibitem{Heil} Heil, C.:  A Basis Theory Primer, in: Applied and Numerical Harmonic Analysis. Birkh\"{a}user, Boston (2011)
\bibitem{Inoue1} Inoue, H.: General theory of regular biorthogonal pairs and its physical operators. J. Math. Phys. {\bf 57}, 083511 (2016)
\bibitem{Inoue} Inoue, H., Takakura, M.: Non-self-adjoint hamiltonians defined by generalized Riesz bases. J. Math. Phys. {\bf 57}, 083505 (2016)
\bibitem{KKS} Kamuda, A.,  Kuzhel, S., Sudilovskaja V.: On dual definite subspaces in Krein space. 
Complex Anal. Oper. Theory {\bf 13}, 1011–1032 (2019)  doi.org/10.1007/s11785-018-0838-x
\bibitem{AK_Krein} Krein, M. G.: Theory of self-adjoint extensions of semibounded operators and its applications I. Math.Trans.
{\bf 20}, 431--495 (1947)
\bibitem{new1} Krej\v{c}i\v{r}{\'\i}k D.,  Siegl, P.: Pseudomodes for Schrödinger operators with complex potentials. J.  Funct. Anal. {\bf 276},  2856--2900 (2019);
doi.org/10.1016/j.jfa.2018.10.004
\bibitem{KS}  Kuzhel, S.,  Sudilovskaja, V.: Towards theory of $\cC$-symmetries. Opuscula Math. {\bf 37}, 65--80 (2017);  dx.doi.org/10.7494/OpMath.2017.37.1.65 
\bibitem{Mit} Mityagin, B., Siegl, P., Viola, J.: Differential operators admitting various rates of spectral projection growth.  
J. Funct. Anal. {\bf 272}, 3129 --3175 (2017)
\bibitem{new2}  Siegl, P., Krej\v{c}i\v{r}{\'\i}k D.: On the metric operator for the imaginary cubic oscillator. Physical review D.  {\bf 86}, 121702 (2012); 
doi.org/10.1103/PhysRevD.86.121702
\end{thebibliography}
\end{document}